\def\Tr{\mathop {\rm Tr}}
\def\anti#1#2{\left\{#1,#2\right\}}
\def\comm#1#2{\left[#1,#2\right]}
\def\calfh{\widetilde{\mathcal H}}
\def\fh{\widetilde H} 
\def\be{\begin{equation}}
\def\beq{\begin{eqnarray}}
\def\ee{\end{equation}}
\def\eeq{\end{eqnarray}}
\def\eqref#1{(\ref{#1})}
\def\lra#1{\langle #1 \rangle}
\def\lrp#1{\left( #1 \right)}
\def\abs#1{\left\vert #1\right\vert}
\def\A{\mathfrak{A}}
\def\even{{\rm even}}
\def\nn{\nonumber\\}
\def\Tr{ {\rm Tr}}
\def\anti#1#2{\left\{#1,#2\right\}}
\def\calfh{\widetilde{\mathcal H}}
\def\fh{\widetilde H} 
\def\be{\begin{equation}}
\def\ee{\end{equation}}
\def\beq{\begin{eqnarray}}
\def\eeq{\end{eqnarray}}
\def\eqref#1{(\ref{#1})}
\def\lra#1{\langle #1 \rangle}
\def\lrp#1{\left( #1 \right)}
\def\A{\mathfrak{A}}
\def\H{\widetilde{H}}
\def\HH{\widetilde{\mathcal{H}}}
\def\comm#1#2{\left[#1,#2\right]}
\newtheorem{thm}{Theorem}
\newtheorem{cor}[thm]{Corollary}
\newtheorem{prop}[thm]{Proposition}
\newtheorem{conj}[thm]{Conjecture}
\newtheorem{defn}[thm]{Definition}
\begin{document}
\title{Vortex Loops and Majoranas}
\author{Stefano Chesi\inst{1,2} \and Arthur Jaffe\inst{3,4,5} \and Daniel Loss\inst{4,2} \and Fabio L. Pedrocchi\inst{4}
}                     
\institute{Department of Physics, McGill University, Montreal, Quebec, Canada H3A 2T8 \and CEMS, RIKEN, Wako, Saitama 351-0198, Japan \and Harvard University, Cambridge, Massachusetts 02138, USA \and Department of Physics, University of Basel, Basel, Switzerland \and Institute for Theoretical Physics, ETH Z\"urich, Z\"urich, Switzerland}
%

%
%
\maketitle
\begin{abstract}
We investigate the role that vortex loops play in characterizing eigenstates of  interacting Majoranas.  We first give some general results, and then we focus on ladder Hamiltonian examples to test further ideas. Two methods yield exact results:  i.)  We utilize the mapping of spin Hamiltonians to quartic interactions of Majoranas and show under certain conditions the spectra of these two examples coincide.   ii) In cases with reflection-symmetric Hamiltonians, we use reflection positivity for Majoranas to characterize vortices.   Aside from these exact results, two additional methods suggest wider applicability of these results:  iii.) Numerical evidence suggests similar behavior for certain systems without reflection symmetry.  iv.) A perturbative analysis also suggests similar behavior without the assumption of reflection symmetry.  
\end{abstract}
\setcounter{equation}{0}

\section{Introduction}
\subsection{Motivation}
The spin systems we study here have their origin in the ``compass model'' introduced by Kugel and Khomskii~\cite{KugelKhomskii} to study the Jahn-Teller effect in magnetic insulators. This model has a rich structure, and in a two-dimensional version known as the honeycomb model, Kitaev found excitations with fractional statistics   in their solution \cite{KitaevHoney}.  There is a long history of fractional-statistics excitations arising from braid statistics and now called anyons,  see for example \cite{Streater-Wilde,Goldin1971,LM,WilczekPRL,FM,FRS,Longo}.

In the honeycomb model (possibly with a magnetic field) both abelian and non-abelian anyons occur for different values of the coupling constants. There has also been extensive study of ``ladders" with possibly anyonic excitations \cite{Vishveshwara2011,Chesi2012,Chen2013}.

Models with anyons appear of interest in current studies of topological quantum computing. One argues that the degenerate ground-state subspace of such systems is a good place to store and to process quantum information, see for example \cite{KitaevToric,KitaevBook,Bravyi2002,FLW,Lloyd,FKLW,Preskill,Pachos}. The advantage for storage is that the subspace of ground states is stable against a wide class of local perturbations. On the other hand, non-local perturbations (such as braiding of anyons) may implement quantum gates.

In this context, it is important to understand the properties of the ground states, for these states are candidates to encode quantum information. In particular, we are interested in the question whether vortices are present or absent in the ground states, which is intimately connected to our study of Majoranas. In particular for a large system, the location of the vortex should be independent of its energy, and therefore labels a degeneracy. Transitions between the different ground states could be counter-productive to storage of information.

We take advantage here of the equivalence of a quartic Hamiltonian $\H$ of interest,  to a family $\{\H_{u}\}$ of Hamiltonians describing quadratic interactions of Majoranas. One naturally comes to the question:  which Hamiltonian within the family has the lowest ground-state energy? The answer to this question characterizes the ground states of the original Hamiltonian.
Besides being of fundamental interest in its own right, the question of determining which Hamiltonian has minimal ground-state energy turns out to be related to the presence or absence of vortices, and hence to the storage of information as outlined above. 
Although other methods might allow one to study the spin Hamiltonians directly, we have not investigated that possibility.

\subsection{Goals and Results}
In this paper we explore some general properties of a family of Majorana interactions on a cubic lattice in $d$ dimensions. We focus on the question of how to identify which Hamiltonians within the family have minimal ground-state energy.  

Our first approach is to use reflection positivity to study the ground states of a family of Hamiltonians that are quadratic in Majoranas.   In the case of charged excitations, such questions were studied in \cite{Lieb1994,Nachtergaele1996}. However here we study neutral systems.  We use the results of \cite{JP} to establish for reflection-symmetric Hamiltonians  that the ground states of the Hamiltonians in the family with the lowest ground-state energy are vortex free.  We give the precise statement in Theorem~\ref{thm:main_theorem}. 

As Osterwalder and Seiler pointed out in their original study of the Wilson action on a lattice \cite{OS3}, when one studies reflection positivity in gauge theory, a useful technique is to perform a unitary gauge transformation which removes the interaction terms across the reflection plane.  Our Hamiltonians  have a gauge symmetry as well, which allows one to fix the sign of the interactions across the reflection plane in our proof of   
Theorem \ref{thm:main_theorem}, as it also was in  \cite{Lieb1994}.

The  Majorana interactions considered here arise naturally from  mapping certain nearest-neighbor quadratic interactions of quantum spins $H$ on a trivalent lattice into quartic interactions of Majoranas  $\widetilde H$ \cite{KitaevHoney}.  In  \S\ref{sec:Exact_Results}, Theorem \ref{prop:multiplicity}, we prove that for open boundary conditions, the spectra of these two Hamiltonians $H$ and $\widetilde  H$ are the same (except for multiplicity).  
On the other hand, using numerical methods we show that the spectra are different  in the case of  periodic boundary conditions.  In spite of the fact that in general the eigenvalues of  $H$ and $\widetilde H$ do not coincide,  numerical evidence suggests that the ground-state energies of these two Hamiltonians are the same. 

Our next approach is to apply reflection-positivity in order to characterize vortex loop configurations of the ground states for fermionic systems arising from certain spin ladders. We use reflection positivity, proved in \cite{JP}, and in Theorem \ref{prop:Vortex_Closed_Spin_Ladder} we show that in the case of a reflection-symmetric interactions with  couplings of the same sign, the ground state is vortex free.  In particular, application of this new result to the specific example of ladder spin systems that we study here in detail is of interest, see Corollary \ref{rem:vortex-free}. 

In the general case when reflection-symmetry is broken, we investigate the properties of the ground-state energy using numerical methods to compute low-lying energy levels.  This investigation indicates that the ground state remains vortex free (with the relevant restriction on the sign of various couplings) even though our mathematical proof does not apply.  Based on this information, we formulate a general {\em vortex-free ground-state conjecture} for ladders in \S\ref{sec:conjecture}.
We also study spin ladders by perturbation theory.  We show that  the ground state remains vortex free, and we incorporate these insights into the conjecture in \S\ref{sec:conjecture}.

\subsection{Organization of the Paper}
In \S\ref{sec:Majorana_Cubic_Lattice} we  define   a family of Hamiltonians with nearest-neighbor Majorana interactions on a cubic lattice in arbitrary dimension.   In this section we assume the existence of a reflection plane leaving the lattice invariant and transforming the Hamiltonians in a simple way. Using reflection-positivity 
one can characterize vortex loop configurations of the Hamiltonians   that minimize the ground state energy within the given family. When all the coupling constants are positive (or negative), the minimal-energy is achieved for a vortex-free ground state. This property of vortex loops is related to results of Lieb \cite{Lieb1994}, and of Macris and Nachtergaele \cite{Nachtergaele1996} for hopping Hamiltonians.

In \S\ref{sec:Quantum_Spin_Ladders}--\S\ref{sect:Ladders and Reflections} we apply these results to spin ladders and their Majorana fermionic representations. 
While in much of this paper we analyze ladders as an example, most of our results extend in a straightforward way to models defined on a honeycomb lattice with similar trivalent couplings at each site. 

In \S\ref{sec:Exact_Results} we show that the spectrum of an open spin ladder coincides with the spectrum of its Majorana fermionic representation, aside from multiplicity. While the spectrum of a closed spin ladder seems not to have this property, we conjecture that the ground state energies are the same.

In \S\ref{sec:Numerical_Evidence} we study certain ladders numerically. These ladders do not possess the symmetry required to use reflection-positivity arguments. Numerical evidence suggests that the ground state energy of a closed spin ladder coincides with the ground state energy of its fermionic representation. Furthermore, the numerical calculations suggest that the ground states remain vortex-free (or vortex-full) as for the spin ladders for which reflection-positivity applies. 

In \S\ref{sec:non reflection-symm} we use  third-order perturbation theory (the lowest non-trivial order) to complement the picture. These results also show that for certain regions of the coupling constants for non-symmetric, open and closed ladders, the ground states are vortex-free (or vortex-full).

\setcounter{equation}{0}
\section{Nearest-Neighbor Majorana Interactions on a Cubic Lattice}\label{sec:Majorana_Cubic_Lattice}
\subsection{The Cubic Lattice}
We consider a finite subset $\Lambda$ of the cubic lattice $\mathbb{Z}^{d}$ in Euclidean $d$-space, with an even number $\abs{\Lambda}$ of sites $i$. We assume $\Lambda$ to be a rectangular box, with sites $i\in\mathbb{Z}^{d}$ and bonds $(ij)$ connecting nearest-neighbor sites. The side length of the box along each coordinate axis may be different. We call this an \emph{open} box. We sometimes \emph{close} the box in one or more coordinate directions. One closes the box in the $k^{\text{th}}$ direction by defining sites with minimum and maximum value of the $k^{\text{th}}$ coordinate, but the same value of each of the other coordinates, to be nearest-neighbors. 

\subsection{The Majoranas and The Hilbert Space}
A set of Majoranas is a self-adjoint representation of an even-dimensional Clifford algebra,  
	\be
	\anti{c_{i}}{c_{j}}=2\delta_{ij}\;,
	\qquad\text{where}
	\quad
	c_{j}=c_{j}^{*}=c_{j}^{-1}\;.
	\ee
We assign a Majorana $c_{j}$ to each site $j$. 
Majoranas can be represented on a Fock-Hilbert space $\HH_{c}$ of dimension $2^{\abs{\Lambda}/2}$ and we use this representation.
We consider the family of Hamiltonians 
	\be\label{eq:u_Hamiltonian}
	\H_{u}=\sum_{(ij)}J_{(ij)}\,u_{ij}\,i\,c_{i}c_{j}\,,
	\ee
with $J_{(ij)}=J_{(ji)}\geqslant0$ and $u_{ij}=-u_{ji}=\pm1$. In case the subscripts are difficult to distinguish, we write $J_{(i,j)}$ in place of $J_{(ij)}$.
\subsection{Vortex Loops}
Define a loop $\mathfrak{C}$ of length $\abs{\mathfrak{C}}=\ell$ as an ordered sequence of nearest-neighbor sites $\{i_{1},i_{2},\ldots,i_{\ell},i_{1}\}$ in $\Lambda$, starting and ending at the same site. In addition we assume $i_{1},\ldots,i_{\ell}$ are distinct so the loop is not self-intersecting. We identify the loop with a closed, directed path connecting nearest-neighbor sites $i_{k}$ and $i_{k+1}$ by bonds $(i_{k}\,i_{k+1})$. Denote $-\mathfrak{C}$ as the reverse loop  which contains the same sites as $\mathfrak{C}$ but the opposite orientation, $\{i_{1},i_{\ell},i_{\ell-1},\ldots,i_{2},i_{1}\}$. Let $\prod_{(ij)\in\mathfrak{C}}K_{ij}$ denote the ordered product around the loop,
	\be
	\prod_{(ij)\in\mathfrak{C}}K_{ij}=K_{i_{1}\,i_{2}}\,K_{i_{2}\,i_{3}}\cdots K_{i_{\ell-1}\,{i_{\ell}}}\,K_{i_{\ell}\,i_{1}}\,.
	\ee  
In the case where $K_{ij}$ are matrices, the starting point of the loop is important, though the trace $\Tr\lrp{\prod_{(ij)\in\mathfrak{C}}K_{ij}}$ is independent of the cyclic permutation of sites in the loop. The smallest loop contains four sites, which are the corner of a square or plaquette $p$ bounded by the loop $\mathfrak{C}=\partial p$. Define a loop to be {\em non-degenerate} if the coupling constants on the loop do not vanish:
	\be
		\mathfrak{C} \text{ is non-degenerate } 
		\Leftrightarrow
		\prod_{(ij)\in\mathfrak{C}}J_{(ij)}\neq0\,.
	\ee
Define the vortex loop $\widetilde{\mathfrak{B}}(\mathfrak{C})$ as
	\be\label{eq:vortex_loop}
		\widetilde{\mathfrak{B}}(\mathfrak{C})
		=-\prod_{(ij)\in\mathfrak{C}}\,u_{ij}\,.
	\ee
	
In case $\widetilde{\mathfrak{B}}(\mathfrak{C})=1$ we say that the loop $\mathfrak{C}$ is {\em vortex-free}. In case $\widetilde{\mathfrak{B}}(\mathfrak{C})=-1$ we say that $\mathfrak{C}$ is {\em vortex-full}.  We say that a state is vortex-free or vortex-full, in case all loops $\mathfrak{C}$ are vortex-free or vortex-full.   In case $\mathfrak{C}$ bounds a surface, one can interpret the vortex configuration $\widetilde{\mathfrak{B}}(\mathfrak{C})$ in terms of  flux through the surface. 

\subsection{Fermionic Fock Representation}
We represent the Hilbert space $\HH_{c}$ as a fermionic Fock space generated by $\abs{\Lambda}/2$  real creation operators $a_{\mu}^{*}$ and their adjoints $a_{\mu}$  are the corresponding annihilation operators. Here $\mu=1,\ldots,\abs{\Lambda}/2$. Each creation-annihilation pair gives rise to two Majoranas
	\be\label{eq:representation_Majorana}
		m_{\mu1}=a_{\mu}+a_{\mu}^{*}\,,
		\qquad\text{and}\quad
		m_{\mu2}=i\lrp{a_{\mu}-a_{\mu}^{*}}\,.
	\ee

\subsection{The $\mathbb{Z}_{2}$ Gauge Group on $\HH_{c}$}
It is convenient to introduce the gauge group $\mathfrak{G}^{c}$ that acts on $\HH_{c}$. The generators of this group are the operators
	\be\label{eq:gauge_generators_on_the_c}
		U^{c}_{j}
		=c_{j}\;\mathcal{U}^{c}\,,\qquad\text{where}\quad\mathcal{U}^{c}={\prod_{j=1}^{4N}}\,c_{j}\,.
	\ee
We later choose an order for the product $\mathcal{U}^{c}$, but conjugation by $\mathcal{U}^{c}$ does not depend on the choice. The group $\mathfrak{G}^{c}$ has dimension $2^{\abs{\Lambda}+1}$, since $\lrp{U_{j}^{c}}^{2}=-I$.

A general gauge transformation $W\in\mathfrak{G}^{c}$ on $\HH_{c}$ depends upon $\abs{\Lambda}+1$ two-valued parameters $\mathfrak{n}=\{n_{0},n_{1},\ldots,n_{\abs{\Lambda}}\}$. It has the form
	\be\label{eq:gauge_V}
	 W(\mathfrak{n})
	 =(-1)^{n_{0}}\lrp{U_{1}^{c}}^{n_{1}}\,\lrp{U_{2}^{c}}^{n_{2}}\,\cdots \lrp{U_{\abs{\Lambda}}^{c}}^{n_{\abs{\Lambda}}}\,,
	\ee
where $n_{k}=0,1$.
Conjugation by the unitary $W(\mathfrak{n})$ acts on the $c_{k}$'s as an automorphism that we also denote by $W(\mathfrak{n})$. We write,
	\be
	W(\mathfrak{n})(c_{k})
	=W(\mathfrak{n})\,c_{k}\,W(\mathfrak{n})^{*}
	=(-1)^{n_{k}} c_{k}\,.
	\ee

\subsection{Reflection-Symmetry}
In certain sections we consider lattices that are symmetric under a reflection $\vartheta$ in a hyperplane $\Pi$, that intersects no lattice sites. The reflection defines two disjoint subsets of the lattice $\Lambda_{\pm}$ of $\Lambda=\Lambda_{-}\cup\Lambda_{+}$ that map into each other,
	\be
		\vartheta\,\Lambda_{\pm}
		=\Lambda_{\mp}\,,\qquad 				\vartheta^2={\rm Id},\qquad\text{with}\quad \vartheta: i\mapsto \vartheta i
	\,.
	\ee
The reflection $\vartheta$ acts on loops as 
	\be
		\vartheta(\mathfrak{C})=\vartheta(\{i_{1},i_{2},\ldots,i_{\ell},i_{1}\})
		=\{\vartheta i_{1}, \vartheta i_{2},\ldots,\vartheta i_{\ell},\vartheta i_{1}\}	\;.
		\ee
We say that a loop $\mathfrak{C}$ is reflection-symmetric under the action of $\vartheta$, if $\vartheta(\mathfrak{C})=-\mathfrak{C}$.

We represent $\vartheta$ on $\HH_{c}$ as an \emph{anti-unitary} transformation with
	\be\label{eq:alpha_j}
	 	\vartheta(c_{j})
		=\vartheta\, c_{j}\, \vartheta^{-1}
		=c_{\vartheta j}\,.
	\ee
The transformation $\vartheta$ defines an anti-linear automorphism of the algebra generated by the $c_{j}$'s, which we also denote by $\vartheta$.
\begin{defn}
The Hamiltonian $\H_{u}$ is reflection-symmetric if $\vartheta(\H_{u})=\H_{u}$.
\end{defn}

\subsection{The Fermionic Algebra on $\HH_{c}$}
Define the fermionic algebra $\mathfrak{A}_{c}$ as the algebra generated by the $c_{j}$'s for $j\in\Lambda$. Let $\A_{c}^{\text{even}}$ denote the even subalgebra of $\mathfrak{A}_{c}$, generated by even monomials in the fermionic operators. Similarly let $\A_{c,\pm}\subset\mathfrak{A}_{c}$ denote the subalgebras generated by the $c_{j}$'s with $j\in\Lambda_{\pm}$. Also let $\mathfrak{A}_{c,\pm}^{\text{even}}$ denote the even subalgebras of $\mathfrak{A}_{c,\pm}$.

\subsection{Reflection Positivity}
Reflection positivity  (RP) for Majoranas is a condition on a Hilbert space, an algebra of operators on the Hilbert space, a reflection $\vartheta$ through a plane $\Pi$, and a Hamiltonian. Here we study the Hilbert space $\HH_{c}$, the algebras $\A^{\even}_{c,\pm}$, an implementation of the reflection $\vartheta$ on $\HH_{c}$, and a reflection-symmetric Hamiltonian $\H$. The RP condition states that
	\be\label{eq:reflection_positivity_functional}
		\Tr_{\HH_c}\lrp{B\,\vartheta(B)\,e^{-\H}}\geqslant0\,,\qquad\text{for all $B\in\A^{\even}_{c,\pm}$}\,.
	\ee

Time-reflection positivity was originally discovered in quantum field theory by Osterwalder and Schrader in the context of relating classical fields with quantum fields \cite{OS1}. In particular they introduced the method of ``multiple reflection bounds,'' involving iterated applications of a reflection-positivity bound.  Such bounds have been key for the first mathematical proof of the existence of phase transitions (ground-state degeneracy) in quantum field theory \cite{GJS}, and in proving that certain field theories have infinite volume limits \cite{GJ}. 

RP has also had many applications in the study of phase transitions for classical and quantum spin systems on a lattice; see  Fr\"ohlich,  Simon, and Spencer \cite{FSS}, Dyson, Lieb, and Simon \cite{DLS}, and   Fr\"ohlich, Israel, Lieb, and Simon \cite{FILS1} for more details.   In the context of nearest-neighbor hopping interactions, the vortex configuration of the ground state has been analyzed by Lieb \cite{Lieb1994} and Macris and Nachtergaele \cite{Nachtergaele1996}.  Recently one has shown that RP is also valid for a class of many-body Majorana interactions \cite{JP}; this  family of interactions includes the two-body $\H_{u}$ in \eqref{eq:u_Hamiltonian} with certain restrictions on the coupling constants $J_{ij}$.

\subsection{Vortex Loops and Reflection Positivity}
We study vortex loops $\widetilde{\mathfrak{B}}(\mathfrak{C})$ for ground states of the family of Hamiltonians $\{\H_{u}\}$ with ground state energies $\{\widetilde{E}_{0}(u)\}$.
 \begin{thm}\label{thm:main_theorem}
Let $\H_{u}$ denote a Hamiltonian of the form \eqref{eq:u_Hamiltonian}. Let $\mathfrak{C}$ denote a non-degenerate, reflection-symmetric loop with respect to a reflection $\vartheta$ in the plane $\Pi$. Assume that the magnitudes of the couplings are reflection-symmetric, $\abs{J_{(ij)}}=\abs{J_{(\vartheta i\, \vartheta j)}}$. Then $\min_{u}\widetilde{E}_{0}(u)$ is achieved for a ``vortex-free'' configuration of the $u_{ij}$'s, namely
	\be\label{eq:Theorem_Vortex_Contour}
	\widetilde{{\mathfrak{B}}}(\mathfrak{C})
	=1\,.
	\ee
\end{thm}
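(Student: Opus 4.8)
\section*{Proof proposal}

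The plan is to run a single-reflection version of Lieb's flux argument, with the Majorana reflection-positivity bound of \cite{JP} playing the role of the basic inequality and the Osterwalder--Seiler gauge trick used to put the crossing terms into the form for which that bound holds. First I would record two elementary facts. (a) Since $\vartheta$ is anti-unitary, $i\,c_{i}c_{j}\mapsto -i\,c_{\vartheta i}c_{\vartheta j}$, so under the hypothesis $\abs{J_{(ij)}}=\abs{J_{(\vartheta i\,\vartheta j)}}$ (hence $J_{(ij)}=J_{(\vartheta i\,\vartheta j)}$) the condition $\vartheta(\H_{u})=\H_{u}$ is equivalent to the sign rule $u_{ij}=-u_{\vartheta i\,\vartheta j}$. (b) The vortex $\widetilde{\mathfrak{B}}(\mathfrak{C})$ is invariant under the gauge group $\mathfrak{G}^{c}$, because conjugation by $U^{c}_{j}$ flips exactly the two $u$'s of $\mathfrak{C}$ incident to $j$, leaving their product unchanged; moreover gauge transformations are unitary, so they preserve the spectrum and hence $\widetilde{E}_{0}(u)$. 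Next I would analyze the symmetric loop: writing $\mathfrak{C}=\{i_{1},\dots,i_{\ell}\}$ with $i_{1},\dots,i_{m}\in\Lambda_{+}$ and $i_{m+1},\dots,i_{\ell}\in\Lambda_{-}$, the requirement $\vartheta(\mathfrak{C})=-\mathfrak{C}$ forces $\vartheta i_{k}=i_{\ell+1-k}$, an even length $\ell=2m$, and two crossing bonds $(i_{m},\vartheta i_{m})$ and $(i_{1},\vartheta i_{1})$. Using (a) one checks that each in-plane bond on $\Lambda_{-}$ carries the same $u$-value as its mirror image on $\Lambda_{+}$, so the in-plane factors multiply to a perfect square and drop out of $\prod_{(ij)\in\mathfrak{C}}u_{ij}$, leaving
\[
\widetilde{\mathfrak{B}}(\mathfrak{C})=u_{i_{m},\vartheta i_{m}}\,u_{i_{1},\vartheta i_{1}},
\]
where both crossing $u$'s are taken with the $\Lambda_{+}\!\to\!\Lambda_{-}$ orientation: the vortex of a reflection-symmetric configuration is read off from its two crossing bonds alone.

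Given an arbitrary $u$, I would next use (b) to gauge-fix every reflection-symmetric crossing bond to the common reference sign $\epsilon$ demanded by the RP hypotheses of \cite{JP} (the Osterwalder--Seiler move removing the ``wrong'' sign across $\Pi$). This is possible because $\vartheta$ has no fixed sites, so each crossing bond has a distinct $\Lambda_{+}$-endpoint $i$ whose generator $U^{c}_{i}$ flips that crossing sign and only in-plane signs otherwise; by (b) this costs nothing in energy. I would then form the two reflected doublings $u^{\pm}$: for $u^{+}$ keep the $\Lambda_{+}$-bonds and the crossing bonds of $u$ and define the $\Lambda_{-}$-bonds by $u^{+}_{\vartheta i\,\vartheta j}=-u_{ij}$, and symmetrically for $u^{-}$. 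By construction $\H_{u^{\pm}}$ are reflection-symmetric with crossing bonds at the sign $\epsilon$, so both lie in the family for which \cite{JP} provides RP. By the displayed formula their vortices are $\widetilde{\mathfrak{B}}(\mathfrak{C})[u^{\pm}]=\epsilon^{2}=1$, so both doublings are vortex-free on $\mathfrak{C}$, \emph{regardless} of the vortex carried by the original $u$.

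The core is the reflection bound. Splitting $\H_{u}=\H_{u}^{+}+\H_{u}^{-}+\H_{u}^{\mathrm{cross}}$ into its $\A^{\even}_{c,+}$-, $\A^{\even}_{c,-}$- and crossing parts, and expanding the crossing exponential with the crossing term in the RP-good form $-\sum_{\alpha}C_{\alpha}\vartheta(C_{\alpha})$ produced by the gauge-fixing, a Cauchy--Schwarz estimate for the reflection-positive functional \eqref{eq:reflection_positivity_functional} applied to a Trotter expansion yields
\[
\Tr\lrp{e^{-\beta\H_{u}}}\le \Tr\lrp{e^{-\beta\H_{u^{+}}}}^{1/2}\,\Tr\lrp{e^{-\beta\H_{u^{-}}}}^{1/2}.
\]
Taking $\tfrac1\beta\log$ and letting $\beta\to\infty$ converts this into $\widetilde{E}_{0}(u)\ge\tfrac12\big(\widetilde{E}_{0}(u^{+})+\widetilde{E}_{0}(u^{-})\big)\ge\min\big(\widetilde{E}_{0}(u^{+}),\widetilde{E}_{0}(u^{-})\big)$. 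Since $u$ was arbitrary and both $u^{\pm}$ are vortex-free on $\mathfrak{C}$, it follows that $\min_{u}\widetilde{E}_{0}(u)=\min_{\widetilde{\mathfrak{B}}(\mathfrak{C})=1}\widetilde{E}_{0}(u)$; in particular a global minimizer may be taken vortex-free, which is \eqref{eq:Theorem_Vortex_Contour}. Here non-degeneracy of $\mathfrak{C}$ is what guarantees that the crossing couplings are nonzero, so that $\widetilde{\mathfrak{B}}(\mathfrak{C})$ genuinely records a difference between $\H_{u}$ and $\H_{u^{\pm}}$.

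The step I expect to be the main obstacle is the reflection bound itself. Because $\vartheta$ is anti-unitary and the crossing operators $c_{i}$ are \emph{odd}, the naive estimate $-\sum_{\alpha}C_{\alpha}\vartheta(C_{\alpha})\le 0$ requires the Grassmann signs and the explicit factor of $i$ in $i\,c_{i}c_{\vartheta i}$ to be tracked with care; this is precisely what the Majorana RP of \cite{JP} supplies, and the sole purpose of the gauge-fixing is to place $\H_{u}^{\mathrm{cross}}$ in the sign class to which that positivity applies. A secondary point needing care is the $\beta\to\infty$ limit in the presence of ground-state degeneracy, where one notes that the subexponential degeneracy prefactors are washed out by $\tfrac1\beta\log(\cdot)$ and do not affect the comparison of ground-state energies.
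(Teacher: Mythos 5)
Your proposal is correct and takes essentially the same route as the paper's proof: an Osterwalder--Seiler gauge transformation fixing the signs of the crossing couplings, the two reflected doublings (the paper's $\widetilde{\H}_{u,1}$ and $\widetilde{\H}_{u,2}$), the trace inequality supplied by Majorana reflection positivity \cite{JP} followed by the $\beta\to\infty$ limit, and finally the observation that any reflection-symmetric configuration with positively fixed crossing couplings is vortex-free on $\mathfrak{C}$. Your reduction of $\widetilde{\mathfrak{B}}(\mathfrak{C})$ to the product of the two crossing $u$'s (in-plane mirror pairs cancelling) is just a compact repackaging of the paper's multiplication of the sign conditions \eqref{eq:cond_general_C_1_third_case} and \eqref{eq:cond_general_C_2_third_case} around the loop.
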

\begin{proof}
Consider a loop $\mathfrak{C}$ of length $2L$ symmetrically crossed by the hyper-plane $\Pi$. This means that $\Lambda_{\pm}\cap\mathfrak{C}$ each contain $L$ sites. Relabel the sites of $\mathfrak{C}$ as $1,\ldots,2L$ so that the bonds  in order on $\mathfrak{C}\cap\Lambda_{-}$ are $(i\,i+1)$ with  $i=1,\ldots,L-1$. Similarly  on  $\mathfrak{C}\cap\Lambda_{+}$ the bonds are $(i\,i+1)$ with $i=L+1,\ldots,2L-1$. 
Choose the starting point of $\mathfrak{C}$ so that the bonds cutting $\Pi$ are $(2L,\,1)$ and $(L,\,L+1)$. 

Define $\Lambda_{\Pi\,\pm}\subset\Lambda_{\pm}$ as those sites in $\Lambda_{\pm}$ that border $\Pi$.
Decompose $\H_{u}=\H_{u,-}+\H_{u,0}+\H_{u,+}$ where $\H_{u,\pm}\in\A_{\pm}$ and 
	\be
		\H_{u,0}=\sum_{i}J_{(i\,\vartheta i)}\,u_{i \,\vartheta i}\,ic_{i}\vartheta(c_{i})\,,\qquad\text{with $i\in\Lambda_{\Pi\,-}$}\,.
	\ee
Perform a gauge transformation $W(\mathfrak{n})\in\mathfrak{G}^{c}$ of the form \eqref{eq:gauge_V}, with $n_{i}=0$ except for $i\in\Lambda_{\Pi\,-}$. 
Choose $n_{i}$  to ensure that the interactions in $\widetilde{\H}_{u,0}=W(\mathfrak{n})\,\H_{u,0}\,W(\mathfrak{n})^{*}$ across $\Pi$ are positive, namely
\be\label{eq:cond_new}
	J_{(i\,\vartheta i)}\,u_{i\,\vartheta i}\,(-1)^{n_{i}}\,>0\,,\qquad\text{for $i\in\Lambda_{\Pi\,-}$}\,.
	\ee
Also define the Hamiltonians $\widetilde{\H}_{u,1}$ and $\widetilde{\H}_{u,2}$ as
	\be
		\widetilde{\H}_{u,1}
		=\widetilde{\H}_{u,-}+\widetilde{\H}_{u,0}+						\vartheta(\widetilde{\H}_{u,-})\,,
		\quad\text{and}\quad
		\widetilde{\H}_{u,2}
		=\vartheta(\widetilde{\H}_{u,+})+\widetilde{\H}_{u,0}+			\widetilde{\H}_{u,+}\,,
	\ee
where 
	\be
		\widetilde{\H}_{u,-}
		=W(\mathfrak{n})\,\H_{u,-}\,W(\mathfrak{n})^{*}\,,
		\qquad\text{and}\qquad
		\widetilde{\H}_{u,+}
		=W(\mathfrak{n})\,\H_{u,+}\,W(\mathfrak{n})^{*}\,.
	\ee
Since $\vartheta(\widetilde{\H}_{u,0})=\widetilde{\H}_{u,0}$, the Hamiltonians $\widetilde{\H}_{u,1}$ and  $\widetilde{\H}_{u,2}$ are  reflection-symmetric,
	\be\label{eq:RPH}
		\vartheta (\widetilde{\H}_{u,1})
		=\widetilde{\H}_{u,1}\,,
		\qquad\text{and}\qquad
		\vartheta (\widetilde{\H}_{u,2})
		=\widetilde{\H}_{u,2}\,.
	\ee
Furthermore the coupling constants in $\widetilde{\H}_{u,0}$ that cross the reflection plane $\Pi$ are positive.

The Hamiltonians $\widetilde{\H}_{u,1}$ and $\widetilde{\H}_{u,2}$ satisfy the hypothesis of Theorem 3 in \cite{JP}. In that paper one studies reflection-positivity for a class of interacting Majorana systems including the present one satisfying \eqref{eq:cond_new} and \eqref{eq:RPH}. From this result one concludes the reflection-positivity conditions. For $B\in\mathfrak{A}_{\pm}^{\text{even}}$,
	\be\label{eq:RP_HU1_HU2_new}
		\Tr_{\HH_{c}}\lrp{B\,\vartheta(B)\,e^{-\widetilde{\H}_{u,1}}}\geqslant0\,,
		\quad\text{and}\quad
			\Tr_{\HH_{c}}\lrp{B\,\vartheta(B)\,e^{-\widetilde{\H}_{u,2}}}\geqslant0\,.
	\ee
A direct consequence of the reflection-positivity conditions \eqref{eq:RP_HU1_HU2_new} is the reflection-positivity bound for any $\beta\geqslant0$, 
	\be\label{eq:trace_bound_new}
		\Tr_{\HH_{c}}\, e^{-\beta\widetilde{\H}_{u}}
		\leqslant
		\lrp{ \Tr_{\HH_{c}}\,e^{-\beta\widetilde{\H}_{u,1}}}^{1/2}	\lrp{ \Tr_{\HH_{c}}\, e^{-\beta\widetilde{\H}_{u,2}}}^{1/2}\,.
	\ee
This bound is a special case of the reflection-positivity inequality for interacting Majorana systems proved in Proposition 8 of \cite{JP}. 
The reflection-positivity bound \eqref{eq:trace_bound_new} allows one to establish an inequality on the ground state energy $\widetilde{\widetilde{E}}_{0}(u)$ of the Hamiltonian $\widetilde{\H}_{u}$ in terms of the ground-state energies $\widetilde{\widetilde{E}}_{0}(u,1)$ and $\widetilde{\widetilde{E}}_{0}(u,2)$ of the Hamiltonians $\widetilde{\H}_{u,1}$ and $\widetilde{\H}_{u,2}$, namely
		\be\label{eq:inequality_energies_new}
			0\geqslant\widetilde{\widetilde{E}}_{0}(u)\geqslant 						\frac{\widetilde{\widetilde{E}}_{0}(u,1)+\widetilde{\widetilde{E}}_{0}(u,2)}{2}\,.
		\ee
Taking $\beta$ large in \eqref{eq:trace_bound_new} proves \eqref{eq:inequality_energies_new}.

Conjugation by the gauge transformation $W(\mathfrak{n})$ does not change the ground state energy $\widetilde{E}_{0}(u)$ of $\H_{u}$, so $\widetilde{\widetilde{E}}_{0}(u)=\widetilde{E}_{0}(u)$. Nor does conjugation by the gauge transformation $W(\mathfrak{n})$ change the value of any vortex loop $\widetilde{B}(\mathfrak{C})$. Thus $\min_{u}\widetilde{E}_{0}(u)$ is obtained from some configuration $u=u_{0}$ that is both reflection-symmetric and has positive interactions across $\Pi$. Call this Hamiltonian $\H_{u_{0}}$.

Let $\H_{u_{0}}(\mathfrak{C})$ denote the Hamiltonian that is the restriction of $\H_{u_{0}}$ to bonds $(ij)\in\mathfrak{C}$.
Decompose $\H_{u_{0}}(\mathfrak{C})$ as
	\be
		\H_{u_{0}}(\mathfrak{C})=\H_{u_{0},-}(\mathfrak{C})+\H_{u_{0},0}(\mathfrak{C})+\H_{u_{0},+}(\mathfrak{C})\,,
	\ee 
where positivity of the $J_{(ij)}$'s ensures
	\beq
			\H_{u_{0},-}(\mathfrak{C})
			&=&\sum_{i=1}^{L-1}\,J_{(i,\, i+1)}\,u_{i\, i+1}\,i\,c_{i}c_{i+1}\,,\nn
			\H_{u_{0},+}(\mathfrak{C})
			&=&\sum_{i=L+1}^{2L-1}J_{(i,\,i+1)}\,u_{i\, i+1}\,i\,c_{i}c_{i+1}\,,\\
		    \H_{u_{0},0}(\mathfrak{C})&=&J_{(1,\,2L)}\,u_{1\,2L}\,i\,c_{1}\,\vartheta(c_{1})+J_{(L,\,L+1)}\,u_{L\,L+1}\,i\,c_{L}\,\vartheta(c_{L})\,.\nonumber
	\eeq 
With our chosen representation
	\be\label{eq:cond_general_C_1_third_case}
		J_{(1,\,2L)}\,u_{1\,2L}>0\,,\qquad\text{and}\qquad J_{(L,\,L+1)}\,u_{L\,L+1}>0\,,
	\ee
and also reflection-symmetry $\vartheta(\H_{u_{0},-}(\mathfrak{C}))=\H_{u_{0},+}(\mathfrak{C})$ yields for $i=1,\ldots,L-1$,
	\beq
&&\hskip -2.0cm J_{(i,\, i+1)}\,u_{i\, i+1},i\,c_{2L-i}\,c_{2L-i+1}\nn
 							&=&J_{(2L-i,\,2L-i+1)}\,u_{2L-i\ 2L-i+1}\ i\,c_{2L-i}\,c_{2L-i+1}\,.
		\eeq
Consequently, since the loop $\mathfrak{C}$ is non-degenerate, for $i=1,\ldots,L-1$ one has
	\be	\label{eq:cond_general_C_2_third_case}
		J_{(i,\,i+1)}J_{(2L-i,\,2L-i+1)}\,u_{i\, i+1}\,u_{2L-i\ 2L-i+1}>0\,.
	\ee  

Multiply together conditions \eqref{eq:cond_general_C_1_third_case} with all the conditions \eqref{eq:cond_general_C_2_third_case}, and identify site $2L+1$ with site $1$. One obtains
	\be\label{eq:result_third_case}
		\mathfrak{B}(\mathfrak{C})
		=-\prod_{(ij)\in\mathfrak{C}}\,u_{ij}
		=-\prod_{i=1}^{2L}\, u_{i\, i+1}
		=\text{sgn}\lrp{\prod_{i=1}^{2L}\,J_{(i,\,i+1)}}
		=1\,.
	\ee
The first two equalities and the last equality in \eqref{eq:result_third_case} are definitions, so one only needs to verify the third equality. There is one additional minus sign, which comes from $u_{1\,2L}=-u_{2L\,1}$, with the former appearing in \eqref{eq:cond_general_C_1_third_case} and the later in the product $\prod_{i=1}^{2L}\, u_{i\, i+1}$. This minus sign cancels the explicit minus sign in \eqref{eq:result_third_case}.
$\hfill \qed$
\end{proof}

\setcounter{equation}{0}
\section{Quantum Spin Ladders}\label{sec:Quantum_Spin_Ladders}
One way to realize the family of Hamiltonians $\H_{u}$ defined in \eqref{eq:u_Hamiltonian} is to study nearest-neighbor spin interactions on a trivalent lattice. We consider the simplest example, the quantum spin ladder, corresponding to the case $d=2$ in \S\ref{sec:Majorana_Cubic_Lattice}.

\subsection{Even Spin Ladders}\label{sec:even_ladder}
An  open, even spin ladder  is a $2\times 2N$ square lattice array. The sites of the lattice are connected by bonds linking nearest-neighbor sites. 
We call one given plaquette the {\em unit cell} of the ladder. One  obtains the lattice of the ladder as a union of  $N$ translates of the unit cell by integer multiples of twice the side-length of the unit cell, along one of its coordinate axes (which we choose horizontal).   One completes the ladder with bonds  $(ij)$ that link site $i$ with a nearest-neighbor site $j$.

\begin{figure}[h]
	\centering
		\includegraphics[width=0.9\textwidth]{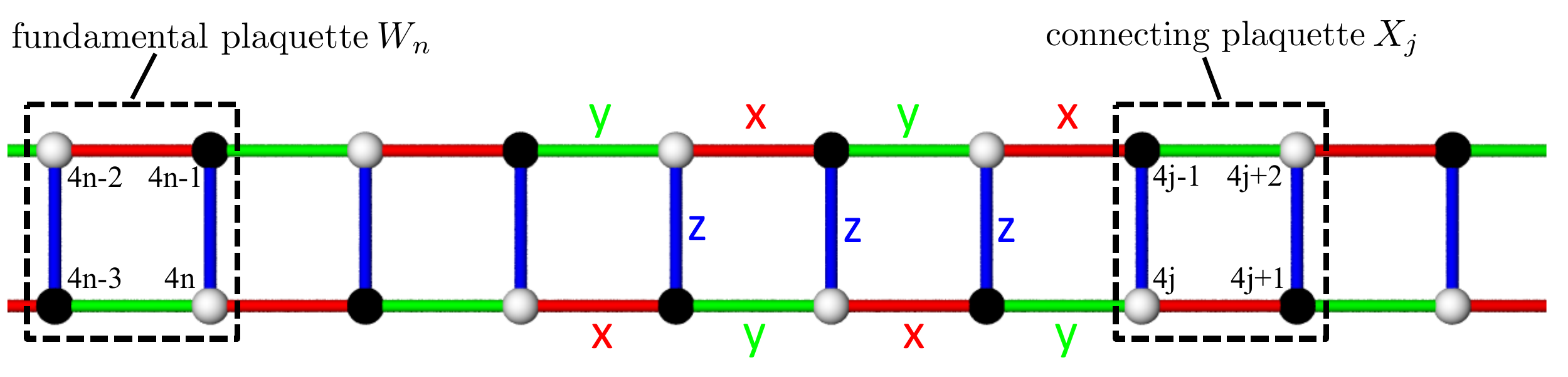}
	\caption{Ladder}
	\label{fig:Ladder}
\end{figure}

We illustrate such a ladder in Fig.~\ref{fig:Ladder}, where we label the plaquettes, vertices, and bonds.
Divide the $2N-1$ plaquettes of the ladder into two sets: the first set comprises $N$ {\em fundamental plaquettes} $p_{1}, p_{3},\ldots,p_{2k-1}, \ldots, p_{2N-1}$ that are the translates of the unit cell that generates the ladder. The other set contains $(N-1)$ {\em connecting plaquettes} $p_{2},p_{4}, \ldots,p_{2k},\ldots, p_{2N-2}$, each of which links two fundamental plaquettes, by sharing two of its  bonds with two different fundamental plaquettes.  

In order to discuss both ``open'' and ``closed'' ladders in a unified way, we introduce one additional connecting plaquette $p_{2N}$ linking $p_{2N-1}$ with $p_{1}$, and two additional bonds to the open ladder, connecting the site $4N$ to the site $1$, and connecting the site $4N-1$ to the site $2$.  The closed ladder corresponds to periodic boundary conditions.
Another way to characterize a closed ladder, is the property that one must remove at least four bonds to divide it into two disconnected pieces.

Label the sites in the fundamental plaquette $p_{2n-1}$ clockwise, starting in the lower-left corner, by $4n-3, 4n-2, 4n-1, 4n$. As a consequence, the sites in the connecting plaquette $p_{2j}$ are labeled clockwise by $4j, 4j-1, 4j+2, 4j+1$.  
The open ladders we consider have $(6N-2)$ bonds, which we divide into three types.  There are $(2N-1)$ type-$x$ bonds, $(2N-1)$ type-$y$ bonds, and $2N$ type-$z$ bonds.  All the vertical bonds  will be type-$z$ bonds.  The horizontal bonds on top of each fundamental plaquette, and on the bottom of each connecting plaquette  are type-$x$ bonds.  The remaining bonds are type-$y$ bonds.

\subsubsection{The Hamiltonian}
 The models we study here and other similar models arise frequently in the study of topological quantum information theory, see for instance \cite{Feng2007,Pachos2008,Nussinov2008,Saket2010,Motrunich2011,Vishveshwara2011,Chesi2012,Chen2013,Wen2012}. 
The  spins at each site  $\vec\sigma_{i}=(\sigma_{i}^{x}, \sigma_{i}^{y}, \sigma_{i}^{z})$ are Pauli matrices.  Here $i$ denotes the lattice site (using the labels above), and $x,y,z$ denotes the three Pauli matrices.    
The Hamiltonian we study is a nearest-neighbor quadratic interaction of the form
	\begin{equation}
	\label{eq:Hamiltonian}
		H
		= -\sum_{(ij)}J_{(ij)}\,\sigma_{i}^{{(ij)}}\sigma_{j}^{{(ij)}}\;, \qquad J_{(ij)}=J_{(ji)}\,\text{ real.}
	\end{equation}
Here the sum over $(ij)$ denotes a sum over unoriented bonds $(ij)$ between nearest neighbor lattice sites in the ladder.  
Also $\sigma_{i}^{(ij)}$ equals $\sigma_{i}^{x}$, $\sigma_{i}^{y}$, or $\sigma_{i}^{z}$, according to whether the bond $(ij)$  is type-$x$, type-$y$, or type-$z$, respectively, as defined above; thus the couplings labeled by a bond depends only on products of the same components of $\vec \sigma$ at different sites. 

A simple case of this Hamiltonian which we call \emph{homogeneous couplings} is the case for which every type-$x$ bond has coupling $J_{x}$, every type-$y$ bond has coupling $J_{y}$, and every type-$z$ bond has coupling $J_{z}$. 
The open ladder Hamiltonian corresponds to taking the two coupling constants closing the ladder equal to zero, namely $J_{(4N\,1)}=J_{(4N-1\,2)}=0$.

\subsection{Vortex Loops}
For each loop $\mathfrak{C}$, we assign a   {\em vortex-loop operator} (or simply a {\em vortex}) $\mathfrak{B}(\mathfrak{C})$.  This is proportional to the {\em ordered} product along the loop of terms in the interaction. Recall that  $\sigma_{i}^{{(ij)}}\sigma_{j}^{{(ij)}}$ is the term in the Hamiltonian \eqref{eq:Hamiltonian} on the bond $(ij)$.  Define 
	\beq\label{eq:Vortex Loop Operator}
		\mathfrak{B}(\mathfrak{C})
		&=&i^{\abs{\mathfrak{C}}+2} \prod_{(ij)\in \mathfrak{C}}\lrp{ \sigma_{i}^{{(ij)}}\sigma_{j}^{{(ij)}}}\;,
	\eeq
similar to (5-6) in Kitaev \cite{KitaevHoney}. As $\sigma_{i}^{(ij)}$ is self-adjoint with square $I$, we infer that $\mathfrak{B}(\mathfrak{C})$ is unitary.  
We devote the rest of this paper to the study of properties of the operators $\mathfrak{B}(\mathfrak{C})$. 

\setcounter{equation}{0}
\section{Fermionic Ladders}\label{sec:Fermionic_Ladders}
\subsection{Mapping of Spins to Fermions}
We use a representation of the Pauli matrices as quadratic expressions in Majoranas. Choose four Majoranas  at each lattice site $j$ and denote them $b_{j}^{x}, b_{j}^{y}, b_{j}^z$, and $c_{j}$.
Define the  algebra $\mathfrak{A}$ as the algebra generated by the $b_{j}^{x,y,z}$ and $c_{j}$ for $j\in\Lambda$. Let $\A^{\text{even}}$ denote the even subalgebra of $\mathfrak{A}$, generated by even monomials in the fermionic operators.
One defines for the single site $j$,
	\begin{equation}
		\label{eq:Spin Representation}
		\widetilde{\sigma}_{j}^{\alpha}
		=ib_{j}^{\alpha}\,c_{j}\;,
	\end{equation}
as in the usual construction of boost operators for the Dirac equation, see for example \S 4c of \cite{Schweber}. For a single chiral component of the spin-$1/2$ Dirac wave function, the boost generator is isomorphic to the spin.

Denote the vector space of the four Majoranas as  $\widetilde{\mathcal H}_{j}$. In order to project onto a single chiral component, one restricts to the eigenspace $+1$ of the self-adjoint matrix $\gamma_{j}^{5}=b_{j}^{x}\, b_{j}^{y}\, b_{j}^z\, c_{j}$ with square one. On the full Hilbert space $\H$ these $\gamma_{j}^{5}$'s mutually commute and commute with each $\widetilde{\sigma}^{x,y,z}_{j}$. 
 The corresponding orthogonal projection onto this eigenspace is $P=\prod_{j}P_{j}=\prod_{j}\frac{1}{2}\left(I+\gamma_{j}^{5}\right)$, and it yields $\mathcal H=P\widetilde{\mathcal H}$ of dimension $2^{4N}$.
The $\sigma_{j}^{\alpha}=P_{j} \,\widetilde \sigma_{j}^{\alpha}\, P_{j}$'s satisfy the correct multiplication laws for spin matrices on $\mathcal{H}$.
We call $\mathcal H$ the {\em spin-ladder Hilbert space}.

\subsection{Representation of the  Hamiltonian}
Introduce the three skew $4N\times 4N$ matrices $ u$, $A$, and $C$  with entries that are hermitian operators,
	\be
	u_{ij}=-u_{ji}=u_{ij}^{*}\,,
	\qquad 
	A_{ij}=-A_{ji}=A_{ij}^{*}\,,
	\qquad\text{and}\quad
	C_{ij}=-C_{ji}=C_{ij}^{*}\,.
	\ee
We define these matrix elements to vanish unless $i,j$ are nearest-neighbors. In this case
	\be
		{u}_{ij}
		=
		 i b_{i}^{(ij)}\,b_{j}^{(ij)}\,,
		 \qquad
		 C_{ij}
		 =
		 ic_{i}c_{j}\,,
		 \qquad
		 \text{and}
		 \quad
		{A}_{{ij}}
		=J_{{(ij)}}\,{u}_{{ij}}\,,
	\label{eq:u and a}
	\ee
with $J_{\lrp{ij}}=J_{\lrp{ji}}$ real.
A representation of the spin-ladder Hamiltonian on the fermionic Hilbert space is 
	 \begin{eqnarray}\label{eq:HamiltonianExtended}
		\widetilde{H}
		=\sum_{\lrp{ij}}{A}_{{ij}}\,C_{ij}\;
		=\sum_{\lrp{ij}} J_{(ij)}  \,u_{ij}   \,i\,c_{i}\,c_{j}
		=\widetilde{H}^{*}\;.
	\end{eqnarray}

The $u_{{ij}}$ operators mutually commute, and they also commute with the Hamiltonian $\widetilde{H}$.  They satisfy $u_{{ij}}^2=+1$, so the eigenvalues of $u_{ij}$ are $\pm1$. Also all the  $\gamma_{j}^5$ commute with $\widetilde{H}$.
Furthermore the Hamiltonian $\widetilde{H}$ commutes with $P$, so it maps the subspace $\mathcal H$ into itself and on this subspace the Hamiltonian has the representation as a sum of self-adjoint operators,
\begin{equation}
		H 
		= P\widetilde H P
		= \sum_{\lrp{ij}}  P\,A_{ij}\,C_{ij}\,P\;.
	\end{equation}
The properties of $\widetilde{H}$ on $\widetilde{\mathcal{H}}$ are different from those of $H$ on $\mathcal{H}$, and in particular the eigenvalues might different (aside from multiplicity). One should be careful not to jump to conclusions, as we give numerical evidence for the existence of both types of behavior in \S\ref{sec:Numerical_Evidence}, see also \cite{PCL}.

\subsection{Representation of the Vortices}
A fermionic representation of $\widetilde{\mathfrak{B}}(\mathfrak{C})$ of $\mathfrak{B}(\mathfrak{C})$ commutes with the projection $P$.  Its projection       
	$
		P\, \widetilde {\mathfrak{B}}(\mathfrak{C}) \, P
	$,
agrees with the original definition \eqref{eq:Vortex Loop Operator} of the vortex $\mathfrak{B}(\mathfrak{C})$. 
We give such a fermionic representation $\widetilde{\mathfrak{B}}(\mathfrak{C})$, similar to \cite{KitaevHoney} and observe that the spin vortices $\mathfrak{B}(\mathfrak{C})$ are mutually commuting, conserved quantities. 

\begin{prop}\label{prop:productu}
A fermionic representation of the vortex loop-operator is given in terms of the mutually-commuting operators $u_{ij}$ as  
	\be\label{eq:vortex_operators_fermions}
		\widetilde {\mathfrak{B}}(\mathfrak{C})
		=- \prod_{\lrp{ij}\in\mathfrak{C}} u_{ij}\,.
	\ee
Each $\gamma_{k}^{5}$ commutes with $\widetilde{\mathfrak{B}}(\mathfrak{C})$, namely
	\be
		\comm{\widetilde{\mathfrak{B}}(\mathfrak{C})}{\gamma_{k}^{5}}=0\,.
	\ee

\end{prop}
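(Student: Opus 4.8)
The plan is to establish the two claims of Proposition~\ref{prop:productu} separately, beginning with the explicit fermionic representation \eqref{eq:vortex_operators_fermions} of the vortex operator, and then verifying commutativity with each $\gamma_{k}^{5}$. For the first claim, I would start from the spin-vortex definition \eqref{eq:Vortex Loop Operator}, namely $\mathfrak{B}(\mathfrak{C})=i^{\abs{\mathfrak{C}}+2}\prod_{(ij)\in\mathfrak{C}}\bigl(\sigma_{i}^{(ij)}\sigma_{j}^{(ij)}\bigr)$, and substitute the Majorana representation $\widetilde\sigma_{j}^{\alpha}=ib_{j}^{\alpha}c_{j}$ from \eqref{eq:Spin Representation} for each spin factor. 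Thus I would first write, for a single bond, $\widetilde\sigma_{i}^{(ij)}\widetilde\sigma_{j}^{(ij)}=(ib_{i}^{(ij)}c_{i})(ib_{j}^{(ij)}c_{j})$, and then reorganize the four Majorana factors so that the $b$-pair $b_{i}^{(ij)}b_{j}^{(ij)}$ and the $c$-pair $c_{i}c_{j}$ are grouped together, keeping careful track of the sign incurred by commuting the fermionic operators past one another.

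The key computational step is to take the ordered product around the whole loop $\mathfrak{C}=\{i_{1},\dots,i_{\ell},i_{1}\}$ and reorganize all the $b$-factors and $c$-factors. Each vertex $i_{k}$ appears in two consecutive bonds, so the $c_{i_{k}}$ operators appear in pairs $c_{i_{k}}c_{i_{k}}=1$ once the reordering is done, eliminating the $c$'s entirely from the loop product. What survives is the ordered product of the $b$-pairs, which by the definition ${u}_{ij}=ib_{i}^{(ij)}b_{j}^{(ij)}$ in \eqref{eq:u and a} is exactly $\prod_{(ij)\in\mathfrak{C}}u_{ij}$ up to a power of $i$ and an overall sign. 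The main obstacle here is precisely the bookkeeping of signs and powers of $i$: one must count the anticommutation sign-changes from disentangling the $b$'s and $c$'s around a loop of length $\ell=\abs{\mathfrak{C}}$, and verify that the prefactor $i^{\abs{\mathfrak{C}}+2}$ in \eqref{eq:Vortex Loop Operator} combines with the $i$'s from the $\widetilde\sigma$-representation and the $i$'s in the definition of $u_{ij}$ to produce exactly the overall factor $-1$ appearing in \eqref{eq:vortex_operators_fermions}. I expect this sign-and-phase accounting to be the genuinely delicate part; the cleanest route is probably to verify it on the smallest loop (the four-site plaquette) and then argue the general case by tracking how the phase changes as one adds bonds to the loop.

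For the commutativity claim $\comm{\widetilde{\mathfrak{B}}(\mathfrak{C})}{\gamma_{k}^{5}}=0$, I would use the representation \eqref{eq:vortex_operators_fermions} just established, so that it suffices to show each $\gamma_{k}^{5}=b_{k}^{x}b_{k}^{y}b_{k}^{z}c_{k}$ commutes with each $u_{ij}=ib_{i}^{(ij)}b_{j}^{(ij)}$. Since $\gamma_{k}^{5}$ is an even (degree-four) monomial and $u_{ij}$ is an even (degree-two) monomial, they automatically commute whenever they share no common Majorana generators. When $k\in\{i,j\}$, say $k=i$, the factor $u_{ij}$ contains exactly one of the four Majoranas appearing in $\gamma_{k}^{5}$, namely $b_{i}^{(ij)}$, while $\gamma_{i}^{5}$ contains all four $b_{i}^{x},b_{i}^{y},b_{i}^{z},c_{i}$; the even parity of $\gamma_{i}^{5}$ guarantees that moving $u_{ij}$ through it picks up an even number of sign-flips, so the commutator vanishes. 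Since the proposition text already observes that the $\gamma_{j}^{5}$ mutually commute and commute with each $\widetilde\sigma_{j}^{\alpha}$, this last step is essentially a parity argument and should be routine once the representation \eqref{eq:vortex_operators_fermions} is in hand.
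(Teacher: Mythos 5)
Your derivation of \eqref{eq:vortex_operators_fermions} is essentially the paper's argument, organized bond-by-bond instead of site-by-site: the paper groups the two spin factors at each site $i_{j}$ of the loop, so that $\sigma_{i_{j}}^{(i_{j-1}\,i_{j})}\sigma_{i_{j}}^{(i_{j}\,i_{j+1})}$ becomes $-b_{i_{j}}^{(i_{j-1}\,i_{j})}\,c_{i_{j}}\,b_{i_{j}}^{(i_{j}\,i_{j+1})}\,c_{i_{j}}=b_{i_{j}}^{(i_{j-1}\,i_{j})}\,b_{i_{j}}^{(i_{j}\,i_{j+1})}$, which disposes of all the $c$'s except the two copies of $c_{i_{1}}$ at the ends; those cancel because $c_{i_{1}}$ commutes with every $u_{ij}$ (an even monomial in $b$'s only). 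This site-wise grouping makes the sign bookkeeping you worry about essentially automatic, with no need to induct on the loop length.

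However, your proof of $\comm{\widetilde{\mathfrak{B}}(\mathfrak{C})}{\gamma_{k}^{5}}=0$ contains a genuine error. It is false that each $u_{ij}$ commutes with $\gamma_{k}^{5}$ when $k\in\{i,j\}$. For monomials in distinct Majoranas of degrees $a$ and $b$ sharing $s$ generators one has $AB=(-1)^{ab-s}BA$; for two \emph{even} monomials this sign is $(-1)^{s}$, so they commute if and only if they share an \emph{even} number of generators. Now $u_{ij}=ib_{i}^{(ij)}b_{j}^{(ij)}$ and $\gamma_{i}^{5}=b_{i}^{x}b_{i}^{y}b_{i}^{z}c_{i}$ share exactly one generator, $b_{i}^{(ij)}$, hence they \emph{anti}commute: $u_{ij}\,\gamma_{i}^{5}=-\gamma_{i}^{5}\,u_{ij}$. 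Your parity count (``the even parity of $\gamma_{i}^{5}$ guarantees an even number of sign-flips'') is off: the number of flips is $ab-s=2\cdot 4-1=7$, which is odd. Consequently the reduction ``it suffices to show each $u_{ij}$ commutes with $\gamma_{k}^{5}$'' cannot work. The correct argument---which is the one in the paper---uses the loop structure: a site $k$ on $\mathfrak{C}$ lies on exactly two bonds of the loop, so the product $\prod_{(ij)\in\mathfrak{C}}u_{ij}$ contains exactly two $b$-Majoranas at site $k$ and no $c_{k}$; it therefore shares exactly two generators with $\gamma_{k}^{5}$, an even number, and hence commutes. Equivalently, the two anticommutation signs produced by the two factors $u_{i_{k-1}i_{k}}$ and $u_{i_{k}i_{k+1}}$ cancel each other, even though neither factor commutes individually. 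For $k\notin\mathfrak{C}$ your observation that disjoint even monomials commute is correct and suffices.
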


\begin{proof}
The contribution to the vortex loop-operator ${\mathfrak{B}}(\mathfrak{C})$ from the spins at site $i_{j}$, for $j\neq1$, is $\sigma_{i_{j}}^{(i_{j-1}\,i_{j})}\sigma_{i_{j}}^{(i_{j}\,i_{j+1})}$.  (In case $j=\ell$, set $\ell+1=1$.)  This product has the fermionic representation
$-b_{i_{j}}^{(i_{j-1}\,i_{j})}\,c_{j}\,b_{i_{j}}^{(i_{j}\,i_{j+1})}\,c_{j}=b_{i_{j}}^{(i_{j-1}\,i_{j})}\,b_{i_{j}}^{(i_{j}\,i_{j+1})}$.  Taking the product of these representations and adding the contribution from the spins at site $i_{1}$, 
one has a fermionic representation for $ \mathfrak{B}(\mathfrak{C}) $ defined in  \eqref{eq:Vortex Loop Operator}  equal to 
	\beq
		\widetilde {\mathfrak{B}}(\mathfrak{C}) 
		 &=& - i^{\abs{\mathfrak{C}}+2}\,b_{i_{1}} ^{(i_{1}\,i_{2})}\,c_{i_{1}}\,
		 b_{i_{2}}^{(i_{1}\,i_{2})}\,b_{i_{2}}^{(i_{2}\,i_{3})}\,
		 b_{i_{3}}^{(i_{2}\,i_{3})}\,b_{i_{3}}^{(i_{3}\,i_{4})}\,
		 \cdots\nn
		 &&\qquad \,b_{i_{\ell-1}}^{(i_{\ell-2}\,i_{\ell-1})} \,
		 b_{i_{\ell-1}}^{(i_{\ell-1}\,i_{\ell})}\,
		 \,b_{i_{\ell}}^{(i_{\ell-1}\,i_{\ell})} \,
		 b_{i_{\ell}}^{(i_{\ell}\,i_{1})}\,
		 b_{i_{1}}^{(i_{\ell}\,i_{1})} 
		 \,c_{i_{1}}\nn
		 &=& i^{\abs{\mathfrak{C}}+2}\,c_{i_{1}}\, b_{i_{1}} ^{(i_{1}\,i_{2})}\,
		 b_{i_{2}}^{(i_{1}\,i_{2})}\,b_{i_{2}}^{(i_{2}\,i_{3})}\,
		 b_{i_{3}}^{(i_{2}\,i_{3})}\,b_{i_{3}}^{(i_{3}\,i_{4})}\,
		 \cdots\nn
		 &&\qquad \,b_{i_{\ell-1}}^{(i_{\ell-2}\,i_{\ell-1})} \,
		 b_{i_{\ell-1}}^{(i_{\ell-1}\,i_{\ell})}\,
		 \,b_{i_{\ell}}^{(i_{\ell-1}\,i_{\ell})} \,
		 b_{i_{\ell}}^{(i_{\ell}\,i_{1})}\,
		 b_{i_{1}}^{(i_{\ell}\,i_{1})} 
		 \,c_{i_{1}}\nn
		 &=& - c_{i_{1}} \, u_{i_{1}i_{2}}\,u_{i_{2}i_{3}}\, \cdots  \,u_{i_{\ell-1}i_{\ell}}\,u_{i_{\ell}i_{1}}\,c_{i_{1}}\nn
		 &=& -  u_{i_{1}i_{2}}\,u_{i_{2}i_{3}}\, \cdots  \,u_{i_{\ell-1}i_{\ell}}\,u_{i_{\ell}i_{1}}\;.
	\eeq
In the last equality we use the fact that $c_{i_{1}}$ commutes with all the $u_{ij}$'s. This establishes the fermionic representation \eqref{eq:vortex_operators_fermions}.  As each $u_{ij}$ is hermitian and the $u_{ij}$ mutually commute, we infer that $\widetilde {\mathfrak{B}}(\mathfrak{C}) $ is hermitian.
Since $\widetilde {\mathfrak{B}}(\mathfrak{C}) $ is a product of $b$ Majoranas, with an even number of $b$'s at each site $i_{j}\in\mathfrak{C}$,  we infer that $\widetilde {\mathfrak{B}}(\mathfrak{C}) $ commutes with each $\gamma_{j}^{5}$.  Therefore $\widetilde {\mathfrak{B}}(\mathfrak{C}) $ commutes with $P$.     
$\hfill\qed$
\end{proof}

From the representation \eqref{eq:vortex_operators_fermions} for $\widetilde {\mathfrak{B}}(\mathfrak{C})$ and the representation \eqref{eq:HamiltonianExtended} for $\H$ in terms of the mutually-commuting, self-adjoint operators $u_{ij}$ with square one, one infers the following two corollaries: 

\begin{cor}
The fermionic vortex representatives $\widetilde{\mathfrak{B}}(\mathfrak{C})$ are all self-adjoint and have eigenvalues $\pm1$. 
Different $\widetilde{\mathfrak{B}}(\mathfrak{C})$  mutually commute,
	\be
		\comm{\widetilde{\mathfrak{B}}(\mathfrak{C})}{\widetilde{\mathfrak{B}}(\mathfrak{C}')}
		=0\,.
	\ee
All the $\widetilde{\mathfrak{B}}(\mathfrak{C})$  are conserved by $\H$, namely
	\be
		\comm{\widetilde{\mathfrak{B}}(\mathfrak{C})}{\H}
		=0\;.
	\ee
\end{cor}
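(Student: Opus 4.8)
The plan is to derive all three assertions directly from the representation $\widetilde{\mathfrak{B}}(\mathfrak{C})=-\prod_{(ij)\in\mathfrak{C}}u_{ij}$ established in Proposition~\ref{prop:productu}, combined with the three elementary properties of the operators $u_{ij}$ recorded just after \eqref{eq:HamiltonianExtended}: each $u_{ij}$ is self-adjoint, $u_{ij}^{2}=+I$, the family $\{u_{ij}\}$ is mutually commuting, and each $u_{ij}$ commutes with $\H$. With these in hand every claim is a short algebraic consequence, so I expect no analytic difficulty.

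First I would treat self-adjointness and the spectrum. A product of mutually commuting self-adjoint operators is self-adjoint, so $\widetilde{\mathfrak{B}}(\mathfrak{C})=-\prod_{(ij)\in\mathfrak{C}}u_{ij}$ is self-adjoint. Using commutativity again, $\widetilde{\mathfrak{B}}(\mathfrak{C})^{2}=\prod_{(ij)\in\mathfrak{C}}u_{ij}^{2}=I$, which confines the spectrum to $\{+1,-1\}$ and gives the asserted eigenvalues.

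Next I would dispatch mutual commutativity and conservation together. Any two elements of the commuting family $\{u_{ij}\}$ commute, and commuting with each factor of a product implies commuting with the whole product; applying this to the two products defining $\widetilde{\mathfrak{B}}(\mathfrak{C})$ and $\widetilde{\mathfrak{B}}(\mathfrak{C}')$ gives $\comm{\widetilde{\mathfrak{B}}(\mathfrak{C})}{\widetilde{\mathfrak{B}}(\mathfrak{C}')}=0$. The same principle, with $\H$ in place of the second vortex, yields $\comm{\widetilde{\mathfrak{B}}(\mathfrak{C})}{\H}=0$, since each $u_{ij}$ commutes with $\H$.

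Since each step invokes only facts already proved, there is no genuine obstacle. The sole point meriting care---were one to make the argument self-contained rather than cite the line after \eqref{eq:HamiltonianExtended}---is the commutation $\comm{u_{ij}}{\H}=0$. This reduces to checking that $u_{ij}$, built from two $b$-Majoranas, commutes with each summand $J_{(kl)}\,u_{kl}\,i\,c_{k}c_{l}$ of $\H$: it commutes with $u_{kl}$ by the mutual commutativity of the $u$'s, and it commutes with $c_{k}c_{l}$ because each $b$-Majorana anticommutes with each $c$-Majorana, so passing $u_{ij}$ through the pair $c_{k}c_{l}$ produces an even number of sign changes.
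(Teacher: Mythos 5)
Your proof is correct and follows essentially the same route as the paper, which derives the corollary directly from the representation $\widetilde{\mathfrak{B}}(\mathfrak{C})=-\prod_{(ij)\in\mathfrak{C}}u_{ij}$ of Proposition~\ref{prop:productu} together with the facts that the $u_{ij}$ are self-adjoint, square to one, mutually commute, and commute with $\H$. Your closing verification that $\comm{u_{ij}}{\H}=0$ (two $b$-Majoranas passing through $c_kc_l$ with an even number of sign changes) simply makes explicit a fact the paper states without proof after \eqref{eq:HamiltonianExtended}, so there is no substantive difference.
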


\begin{cor}
The vortex loop operators ${\mathfrak{B}}(\mathfrak{C})$ are self-adjoint  on $\mathcal{H}$, and have eigenvalues $\pm1$.  Different  ${\mathfrak{B}}(\mathfrak{C})$ mutually commute,
	\be
		\comm{{\mathfrak{B}}(\mathfrak{C})}{{\mathfrak{B}}(\mathfrak{C}')}
		=0\,.
	\ee
The vortex loop-operators are all conserved, namely
	\be
		\comm{{\mathfrak{B}}(\mathfrak{C})}{H}
		=0\;.
	\ee
\end{cor}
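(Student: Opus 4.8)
The plan is to obtain every assertion about $\mathfrak{B}(\mathfrak{C})$ on $\mathcal{H}$ by restricting the corresponding statement about $\widetilde{\mathfrak{B}}(\mathfrak{C})$ on $\widetilde{\mathcal{H}}$ that was established in the preceding Corollary. The single structural fact I would invoke throughout is that both $\widetilde{\mathfrak{B}}(\mathfrak{C})$ (by Proposition~\ref{prop:productu}) and $\widetilde{H}$ commute with the orthogonal projection $P=\prod_{j}\frac{1}{2}(I+\gamma_{j}^{5})$ onto $\mathcal{H}=P\widetilde{\mathcal{H}}$. Because $\widetilde{\mathfrak{B}}(\mathfrak{C})$ commutes with $P$, one has $\mathfrak{B}(\mathfrak{C})=P\,\widetilde{\mathfrak{B}}(\mathfrak{C})\,P=\widetilde{\mathfrak{B}}(\mathfrak{C})\,P=P\,\widetilde{\mathfrak{B}}(\mathfrak{C})$, so that $\mathfrak{B}(\mathfrak{C})$ is exactly the restriction of $\widetilde{\mathfrak{B}}(\mathfrak{C})$ to the invariant subspace $\mathcal{H}$, in agreement with the identification $P\,\widetilde{\mathfrak{B}}(\mathfrak{C})\,P=\mathfrak{B}(\mathfrak{C})$ noted earlier.

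First I would record the elementary observation that, for any two operators $X,Y$ on $\widetilde{\mathcal{H}}$ that commute with the orthogonal projection $P$, restriction to $\mathcal{H}$ is a $*$-homomorphism: $(XY)|_{\mathcal{H}}=X|_{\mathcal{H}}\,Y|_{\mathcal{H}}$, the adjoint computed inside $\mathcal{H}$ satisfies $(X^{*})|_{\mathcal{H}}=(X|_{\mathcal{H}})^{*}$, and consequently $\comm{X}{Y}|_{\mathcal{H}}=\comm{X|_{\mathcal{H}}}{Y|_{\mathcal{H}}}$. Self-adjointness and the eigenvalue claim then follow immediately. The preceding Corollary gives $\widetilde{\mathfrak{B}}(\mathfrak{C})=\widetilde{\mathfrak{B}}(\mathfrak{C})^{*}$ with $\widetilde{\mathfrak{B}}(\mathfrak{C})^{2}=I$, the latter because each $u_{ij}$ is hermitian with $u_{ij}^{2}=I$ and the $u_{ij}$ mutually commute, so from \eqref{eq:vortex_operators_fermions} the square of $-\prod_{(ij)\in\mathfrak{C}}u_{ij}$ is $I$. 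Restricting, one gets $\mathfrak{B}(\mathfrak{C})=\mathfrak{B}(\mathfrak{C})^{*}$ and $\mathfrak{B}(\mathfrak{C})^{2}=I$ on $\mathcal{H}$, which forces the spectrum into $\{\pm1\}$.

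For the two commutation relations I would apply the same restriction homomorphism to the identities $\comm{\widetilde{\mathfrak{B}}(\mathfrak{C})}{\widetilde{\mathfrak{B}}(\mathfrak{C}')}=0$ and $\comm{\widetilde{\mathfrak{B}}(\mathfrak{C})}{\widetilde{H}}=0$ supplied by the preceding Corollary. The first restricts directly to $\comm{\mathfrak{B}(\mathfrak{C})}{\mathfrak{B}(\mathfrak{C}')}=0$. For conservation I would use $H=P\,\widetilde{H}\,P$, which is the restriction of $\widetilde{H}$ to $\mathcal{H}$ since $\widetilde{H}$ commutes with $P$; restricting $\comm{\widetilde{\mathfrak{B}}(\mathfrak{C})}{\widetilde{H}}=0$ then yields $\comm{\mathfrak{B}(\mathfrak{C})}{H}=0$.

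The only point requiring care, and the closest thing to an obstacle, is the bookkeeping that all operators in play, namely $\widetilde{\mathfrak{B}}(\mathfrak{C})$, $\widetilde{\mathfrak{B}}(\mathfrak{C}')$, and $\widetilde{H}$, genuinely commute with $P$, so that passing to the subspace $\mathcal{H}$ preserves products, adjoints, and commutators at once. This is guaranteed by Proposition~\ref{prop:productu} for the vortices and by the already-noted fact that $\widetilde{H}$ commutes with every $\gamma_{j}^{5}$ and hence with $P$. Once this is in hand the corollary is immediate, since its content is precisely the image of the preceding corollary under restriction to $\mathcal{H}$.
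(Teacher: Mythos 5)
Your proposal is correct and matches the paper's intended argument: the paper derives this corollary by projecting the preceding corollary about $\widetilde{\mathfrak{B}}(\mathfrak{C})$ down to $\mathcal{H}=P\widetilde{\mathcal{H}}$, using exactly the facts you invoke---that $\widetilde{\mathfrak{B}}(\mathfrak{C})$ commutes with $P$ (Proposition~\ref{prop:productu}), that $\widetilde{H}$ commutes with $P$, and that $P\,\widetilde{\mathfrak{B}}(\mathfrak{C})\,P=\mathfrak{B}(\mathfrak{C})$, $P\widetilde{H}P=H$. Your explicit verification that restriction to an invariant subspace is a $*$-homomorphism simply fills in the bookkeeping the paper leaves implicit.
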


\subsection{The Reduced Fermionic Hamiltonians}  
Define $\H_{u}$ as the Hamiltonian $\H$ restricted to an eigenspace of the $u_{ij}$'s. Therefore it is useful to represent the Hilbert space $\HH$ in the form of a tensor product 
	\be\label{eq:tensor_product_of_H}
		\HH=\HH_{u}\otimes\HH_{c}\,.
	\ee
Here we consider the $6N$ mutually commuting variables $u_{ij}$ corresponding to the products of $ib_{i}^{(ij)}b_{j}^{(ij)}$ on the $6N$ bonds of a closed ladder. In the case of an open ladder the couplings on the two extra bonds $(1,\, 4N)$ and $(2,\, 4N-1)$ are zero. Each $u_{ij}$ is self-adjoint and has square equal to one, so it can be represented on a two-dimensional Hilbert space. Therefore the Hilbert space $\H_{u}$ has dimension $2^{6N}$, which is exactly $2^{\#_{b}/2}$, where $\#_{b}$ equals the total number of $b^{x,y,z}$ Majoranas. These Majoranas can be represented on a Hilbert space of the same dimension $2^{6N}$.

Define the fermionic algebra $\mathfrak{A}_{c}$ as the subalgebra of $\A$ generated by the $c_{j}$-Majoranas. Since this algebra commutes with all the $u_{ij}$'s, it acts as $I\otimes \mathfrak{A}_{c}$ on $\HH=\HH_{u}\otimes\HH_{c}$.

\setcounter{equation}{0}
\section{Eigenvalues of $\widetilde H$ and of $H$}\label{sec:Exact_Results}
Let $\widetilde E_{0}$ denote the ground-state energy of $\widetilde H$ given in \eqref{eq:HamiltonianExtended}, and let $E_{0}$ denote  the ground-state energy of $H$. We are interested to know when these two ground state energies coincide. 
By the variational principle, there is a normalized vector $\widetilde \Omega\in\calfh$, such that 
	\[
	\widetilde E_{0}
	=\lra{\widetilde\Omega, \widetilde H \widetilde \Omega} 
	= \inf_{\Vert \widetilde \chi\Vert=1}  
	\lra{\widetilde\chi, \widetilde H \widetilde \chi} 
	\le 
	E_{0}\;.
	\] 
One obtains $E_{0}$ by restricting $\widetilde \chi$ to the range of $P$. So if $P\widetilde\Omega=\widetilde\Omega$, then $\widetilde E_{0}=E_{0}$.  More generally, we investigate the eigenvalues of $\widetilde H$, and determine in certain cases  that they are the same as the eigenvalues of  $H$.  In other cases there is evidence that they are different.

For an open ladder, we prove in Theorem \ref{prop:multiplicity} that $\widetilde{H}$ and $H$ have the same eigenvalues. We analyze the ground state of $H$ using the fermionic representation and demonstrate that the ground state is vortex-free.  In reflection symmetric cases we do this in \S\ref{sec:Ref_symm_case}  using reflection positivity.  In \S\ref{sec:non reflection-symm} we analyze some non-reflection symmetric cases using perturbation theory. 

In Proposition \ref{prop:monomials} we explain why the proof of Theorem \ref{prop:multiplicity}  for the open ladder does not apply to the closed ladder. More to the point, numerical calculation shows that the spectra are really different, see the discussion in \S\ref{sec:Numerical_Evidence} and in particular in \S\ref{sec:Numerical Closed Ladders}.

\begin{thm}\label{prop:multiplicity}
Consider an open ladder. The eigenvalues of $H$ defined in \eqref{eq:Hamiltonian} are the same as those of  $\widetilde{H}$ defined in \eqref{eq:HamiltonianExtended}, aside from multiplicity. 
\end{thm}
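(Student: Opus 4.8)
The plan is to prove the two spectral inclusions separately. The inclusion $\mathrm{spec}(H)\subseteq\mathrm{spec}(\H)$ is immediate: since $\comm{\H}{P}=0$, the physical subspace $\mathcal H=P\HH$ is invariant under $\H$, and $H=P\H P$ is simply the restriction $\H|_{\mathcal H}$, so every eigenvalue of $H$ is an eigenvalue of $\H$. All the work is in the reverse inclusion $\mathrm{spec}(\H)\subseteq\mathrm{spec}(H)$, i.e. in showing that every eigenvalue of the extended Hamiltonian is realized on the gauge-invariant (physical) subspace.

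Fix $E\in\mathrm{spec}(\H)$. Because $\H$ commutes with all the $u_{ij}$, I may choose an eigenvector $\widetilde\Psi$ lying in a single joint eigensector of the $u_{ij}$'s; there $\H$ acts as the quadratic Majorana Hamiltonian $\H_{u}=\sum_{(ij)}J_{(ij)}u_{ij}\,ic_{i}c_{j}$ on the $c$-Fock space $\HH_{c}$, and I may take $\widetilde\Psi$ to have a definite total $c$-parity. The goal is to produce a nonzero vector in $\mathcal H$ of the same energy, i.e. to show $P\widetilde\Psi\neq0$, since $\comm{\H}{P}=0$ then makes $P\widetilde\Psi$ an eigenstate of $H$ with eigenvalue $E$. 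Writing $P=\prod_{j}\tfrac12(I+\gamma_{j}^{5})$ and expanding over subsets $S$, the only gauge elements $\prod_{j\in S}\gamma_{j}^{5}$ that fix the chosen $u$-sector are $I$ and $D=\prod_{j}\gamma_{j}^{5}$ (the ladder graph being connected), and the two members of each coset differ by $D$. A short computation then shows that, for this $D$-eigenvector $\widetilde\Psi$, one has $P\widetilde\Psi\neq0$ exactly when $D\widetilde\Psi=+\widetilde\Psi$.

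The decisive step is to evaluate $D$ on the fixed sector. Collecting the $b$'s and $c$'s, $D$ factors as $\pm\bigl(\prod_{(ij)}u_{ij}\bigr)\,B_{\mathrm{dang}}\,\Pi_{c}$, where $\Pi_{c}$ is the total $c$-parity and $B_{\mathrm{dang}}$ is the product of those Majoranas $b_{j}^{\alpha}$ that occur in no bond. Here the open boundary is essential: an open ladder has $6N-2$ bonds, while there are $3\cdot4N=12N$ Majoranas $b_{j}^{\alpha}$ in all, so $12N-2(6N-2)=4$ of them (one at each of the sites $1,2,4N-1,4N$ where a closing bond is absent) lie on no bond. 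These four dangling Majoranas commute with $\H$, so the energy-$E$ eigenspace factorizes, with $\H$ acting trivially on the dangling factor, and $B_{\mathrm{dang}}$ (a product of four Majoranas) has both eigenvalues $\pm1$ there. On the fixed sector $\prod_{(ij)}u_{ij}$ is a constant sign, so $D=\pm B_{\mathrm{dang}}\Pi_{c}$; choosing the dangling state in the appropriate eigenspace of $B_{\mathrm{dang}}$ forces $D\widetilde\Psi=+\widetilde\Psi$ whatever the $c$-parity of $\widetilde\Psi$. Hence $P\widetilde\Psi\neq0$, and the reverse inclusion follows.

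I expect the main obstacle to be the parity bookkeeping in the factorization of $D$ and in the projection criterion $D\widetilde\Psi=+\widetilde\Psi$: one must track the reordering signs relating $\prod_{(ij)}u_{ij}$, $B_{\mathrm{dang}}$, and $\Pi_{c}$, and verify that $B_{\mathrm{dang}}$ genuinely supplies both eigenvalues. This is precisely where open and closed ladders part ways, which is why the hypothesis singles out the open case. For a closed ladder every $b_{j}^{\alpha}$ lies on a bond, so $B_{\mathrm{dang}}$ is absent and $D$ reduces to a fixed sign times $\Pi_{c}$; the constraint $D=+1$ then pins the $c$-parity and can exclude one parity of eigenstates, so $P$ may annihilate an entire energy level. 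This is the mechanism behind the failure for periodic boundary conditions recorded in Proposition~\ref{prop:monomials}.
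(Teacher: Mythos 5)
Your proposal is correct, but it follows a genuinely different route from the paper's own proof. The paper works site by site: for each $j$ it constructs an explicit operator $Q_{j}$ (the product of the $u_{i'j'}$ along a path from $j$ to site $4N$, followed by the boundary Majorana $b_{4N}^{x}$, which is absent from $\H$ in the open case) that anti-commutes with $\gamma_{j}^{5}$, commutes with $\H$ and with the other $\gamma_{i}^{5}$, and hence flips any negative $\gamma_{j}^{5}$-eigenvalue at fixed energy until the eigenvector lies in the all-$+1$ sector fixed by $P$. You instead work globally: you expand $P=\prod_{j}\tfrac12(I+\gamma_{j}^{5})$ over the gauge group, use connectedness of the open ladder to show that within a fixed $u$-sector the only surviving obstruction is the single constraint $D=\prod_{j}\gamma_{j}^{5}=+1$, and then use the four dangling Majoranas $b_{1}^{x},b_{2}^{y},b_{4N-1}^{y},b_{4N}^{x}$ (which commute with $\H$ and with all bond $u_{ij}$'s, and whose product $B_{\mathrm{dang}}$ anti-commutes with each of them, hence attains both eigenvalues $\pm1$ on any invariant subspace) to adjust the $D$-eigenvalue at no energy cost. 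Both arguments hinge on exactly the same physical fact—the open boundary supplies Majoranas that do not appear in $\H$—and your mechanism for the closed-ladder failure (no dangling $b$'s, so $D$ pins the $c$-parity) is the correct counterpart of the paper's Proposition~\ref{prop:monomials}. What each buys: the paper's construction yields the explicit multiplicity statement (each eigenvalue of $H$ occurs $2^{4N}$ times in $\H$) and concrete flip operators, while yours isolates the unique global obstruction and explains the open/closed dichotomy in one stroke; notably, the paper itself remarks that an alternate proof could be based on the explicit form of $P$ as a function of the $u_{ij}$ (Appendix A of \cite{PCL}), and your argument is essentially a self-contained realization of that alternative. The only points you should tighten in a full write-up are routine: the reordering sign in $D=\pm\bigl(\prod_{(ij)}u_{ij}\bigr)B_{\mathrm{dang}}\Pi_{c}$ (which, as you note, is irrelevant because $B_{\mathrm{dang}}$ supplies both signs), the orthogonality of distinct $u$-sectors underlying the equivalence $P\widetilde\Psi\neq0\Leftrightarrow D\widetilde\Psi=\widetilde\Psi$, and replacing the informal ``factorizes'' by the algebraic statement that the dangling Clifford algebra acts on the fixed-energy, fixed-sector subspace, on which $B_{\mathrm{dang}}$ cannot be a scalar.
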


\begin{proof}
The operators $\gamma_{i}^{5}$ mutually commute and commute with $\widetilde{H}$, so we can simultaneously diagonalize them. We find an operator $Q_j$ with square $\pm1$, which anti-commutes with $\gamma_{j}^{5}$ and commutes with $\widetilde{H}$ and $\gamma_{i}^5$, for $i\neq j$. Let $\widetilde{\Omega}$ be a simultaneous eigenstate of the $\gamma_{i}^{5}$ and $\fh$ with eigenvalues ($\mu_{i}$, $\widetilde{E}$), where $\mu_i=\pm1$ are the eigenvalues of the $\gamma_{i}^{5}$. Then the vector $Q_j\widetilde{\Omega}$ is an eigenstate with the same eigenvalues except for the one $\mu_i$ with $i=j$, that has the opposite sign. (Note $Q_{j}\widetilde{\Omega}\neq 0$, as $Q_j^2=\pm1$.)  By applying $Q_{j}$ for each negative $\mu_{j}$, we obtain   a simultaneous eigenstate  with energy $\widetilde{E}$, and with all the $\mu_i=+1$.  Calling this vector $ {\widetilde \Omega}'$, the projected state $P\widetilde{\Omega}'=\Omega'$ is an eigenstate of $H$ with eigenvalue $\widetilde{E}$.  This also shows that to each eigenvalue $E$ of $H$ is associated $2^{4N}$ eigenvalues of $\fh$, of which all but one of the corresponding eigenvectors project to zero.  

Define the operator $Q_j$  by considering a non-self-intersecting path $\Gamma$ through the ladder from site $j$ to site $4N$. The operator $Q_j$ equals the product of the $u_{i'j'}$ operators along the bonds $\lrp{i'j'}$ on this path, followed by $b_{4N}^x$. This $Q_{j}$ is a product of $b$ operators, so its square is $\pm1$. The operator $b_{4N}^{x}$ does not enter the expression  \eqref{eq:HamiltonianExtended} for $\widetilde{H}$, and each term in $\widetilde H$ is a product of an even number of other fermion operators. Therefore  $Q_j$ commutes with $\widetilde{H}$. 

To complete the proof, we need to check the commutativity of $Q_{j}$ with the operators  $\gamma_{i}^{5}$.  Consider four cases: first suppose the path $\Gamma$ does not pass through  $i$.  Then $\gamma_{i}^{5}$ commutes with each $b$ belonging to $Q_{j}$, so it commutes with $Q_{j}$.   

Second suppose that $i$ is a site on the path $\Gamma$, but $i\neq j$ and $i\neq 4N$.  In this case the site $i$ contributes a product of two different $b_{i}$ operators to $Q_j$; this is the case, because in the ladders we consider, the three bonds ending at site $i$ are of three different types, and the path $\Gamma$ contains two of these bonds.  Each of these two $b_{i}$'s anti-commutes with $\gamma_{i}^{5}$, so their product commutes.  Also $\gamma_{i}^{5}$ commutes with $b_{k}$'s at other sites, so it commutes with $Q_{j}$.   

The third case is $i=4N$.  As before, $\gamma_{4N}^{5}$ commutes with the $b$'s at sites different from $4N$. Only one bond in $\Gamma$ ends at site $4N$, so only one $b_{4N}$ at site $4N$ arises from the path; for our ladders, this must be either $b^{y}_{4N}$ or $b^{z}_{4N}$.  But $Q_{j}$ also includes the extra $b^{x}_{4N}$. So $\gamma_{4N}^{5}$ anti-commutes with this extra $b_{4N}^{x}$ and therefore commutes with the product of the two distinct $b_{4N}$'s that occur in $Q_{j}$.  

The fourth case is $i=j$. In this case only one bond in $\Gamma$ enters site $i$, so only one $b_{i}$ occurs in $Q_{j}$.  Hence $\gamma_{j}^{5}$ anti-commutes with the $b_{j}$'s in $Q_{j}$.  As $\gamma_{j}^{5}$ commutes with the $b$'s at other sites, $\gamma_{j}^{5}$ anti-commutes with $Q_{j}$. These cases cover all possibilities, so we have established all the desired properties of the operators $Q_{j}$. 
$\hfill\qed$
\end{proof}

We remark that an alternate proof could be based on the explicit form of the projection $P:\HH\to \mathcal{H}$ as a function of the variables $u_{ij}$ derived in Appendix A of  \cite{PCL}.
We now show that the proof of Theorem \ref{prop:multiplicity} does not extend in a straightforward way to the closed ladder. This is in line with the numerical calculations we perform in \S\ref{sec:Numerical_Evidence} suggesting that the spectrum of $H$ is different from the spectrum of $\tilde{H}$ for the closed ladder.

\begin{prop}\label{prop:monomials}
Consider a closed ladder Hamiltonian $\H$ of the form \eqref{eq:HamiltonianExtended} with all couplings $J_{(ij)}$ different from 0. There is no non-zero monomial $Q_j$ in the $b$'s and $c$'s that anti-commutes with $\gamma_{j}^{5}$ and commutes with $\widetilde{H}$ and $\gamma_{k}^{5}$ for $k\neq j$.
\end{prop}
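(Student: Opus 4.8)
The plan is to recast the statement as a problem in linear algebra over $\mathbb{F}_{2}$. Any monomial in the $b$'s and $c$'s may be normalized to a product of distinct Majoranas (since $\lrp{b_{k}^{\alpha}}^{2}=c_{k}^{2}=I$), so I encode $Q_{j}$ by its support vector $v\in\mathbb{F}_{2}^{16N}$, one coordinate per Majorana $b_{k}^{\alpha},c_{k}$. For two such monomials with supports $v,w$ one has $Q_{v}Q_{w}=(-1)^{\langle v,w\rangle}Q_{w}Q_{v}$ with $\langle v,w\rangle=\abs{v}\abs{w}+v\cdot w \bmod 2$, where $v\cdot w$ is the standard inner product and $\abs{v}=\sum_{a}v_{a}$. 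Now $\gamma_{k}^{5}=b_{k}^{x}b_{k}^{y}b_{k}^{z}c_{k}$ and each bond term $h_{ij}:=u_{ij}\,ic_{i}c_{j}=-b_{i}^{(ij)}b_{j}^{(ij)}c_{i}c_{j}$ are products of four Majoranas, hence even; writing $g_{k}$ for the support of $\gamma_{k}^{5}$ and $\beta_{ij}$ for the support of $h_{ij}$, the even parity gives $\langle v,g_{k}\rangle=v\cdot g_{k}$ and $\langle v,\beta_{ij}\rangle=v\cdot\beta_{ij}$.

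First I would translate the three requirements into linear conditions. Commutation with $\gamma_{k}^{5}$ for $k\neq j$ and anticommutation with $\gamma_{j}^{5}$ become $v\cdot g_{k}=\delta_{kj}$ for every site $k$ (note $v\cdot g_{k}$ is simply the number of Majoranas of $Q_{j}$ at site $k$, mod $2$). For commutation with $\widetilde{H}$ I would reduce it to commutation with each bond term separately: since $Q_{j}h_{ij}=(-1)^{v\cdot\beta_{ij}}h_{ij}Q_{j}$, the commutator $\comm{Q_{j}}{\widetilde H}$ equals the sum, over the bonds on which $Q_{j}$ anticommutes, of the terms $-2J_{(ij)}h_{ij}Q_{j}$. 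The operators $h_{ij}Q_{j}$ have supports $\beta_{ij}\triangle v$, which are pairwise distinct because distinct bonds give distinct $\beta_{ij}$; hence they are linearly independent monomials, and since every $J_{(ij)}\neq0$, the commutator vanishes if and only if $v\cdot\beta_{ij}=0$ for all bonds. This is exactly the step that invokes the hypothesis that all couplings are non-zero.

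Thus the existence of $Q_{j}$ is equivalent to the existence of $v\in\mathbb{F}_{2}^{16N}$ with $v\cdot\beta_{ij}=0$ for all bonds, $v\cdot g_{k}=0$ for $k\neq j$, and $v\cdot g_{j}=1$. By $\mathbb{F}_{2}$-duality for the non-degenerate inner product, such a $v$ exists if and only if $g_{j}\notin\mathrm{span}\bigl(\{\beta_{ij}\}\cup\{g_{k}:k\neq j\}\bigr)$. I would finish by exhibiting the single global relation
\begin{equation}
\sum_{k}g_{k}=\sum_{(ij)}\beta_{ij}\pmod 2,
\end{equation}
both sides being the all-ones vector: on the left each $b_{k}^{\alpha}$ and each $c_{k}$ occurs once; on the right each $b_{k}^{\alpha}$ comes from the unique type-$\alpha$ bond at $k$, while each $c_{k}$ occurs $\deg(k)=3$ times, and $3$ is odd. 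The relation yields $g_{j}=\sum_{k\neq j}g_{k}+\sum_{(ij)}\beta_{ij}$, so $g_{j}$ lies in the span and no admissible $v$, hence no $Q_{j}$, exists.

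The crux of the argument — and the place I expect the real work to sit — is this global relation and its reliance on trivalence: every site of the closed ladder meets exactly one bond of each type, so every vertex degree is the odd number $3$, which is what makes the $c$-parities add up to the all-ones vector. For the open ladder the same bookkeeping correctly fails: at the sites where the closing bonds $(4N,1)$ and $(4N-1,2)$ were removed the degree is the even number $2$, the $c$-parity on the right no longer matches the all-ones vector, and the only surviving solution is the trivial one, so $g_{j}$ escapes the span and $Q_{j}$ does exist, consistent with Theorem \ref{prop:multiplicity}. The two routine points I would verify to make the reduction airtight are that monomials with distinct supports are linearly independent, and that every Majorana monomial normalizes to a product of distinct generators so that the support-vector encoding is well defined.
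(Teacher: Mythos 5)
Your proof is correct, and each step checks out: the sign rule $Q_vQ_w=(-1)^{\abs{v}\abs{w}+v\cdot w}Q_wQ_v$, the translation of the $\gamma^{5}$ conditions into $v\cdot g_{k}=\delta_{kj}$, the bond-by-bond reduction of $\comm{Q_{j}}{\widetilde H}=0$ to $v\cdot\beta_{ij}=0$ (which is exactly where the hypothesis $J_{(ij)}\neq0$ and the linear independence of monomials with distinct supports enter --- the same two ingredients the paper relies on, citing Proposition 1 of \cite{JP} for the latter), the $\mathbb{F}_{2}$ duality, and the all-ones relation. The paper organizes the argument differently: it first uses the $\gamma^{5}$ conditions to pin $Q_{j}$ down to the form $\pm M_{j}^{-,\alpha}\prod_{k\neq j}M_{k}^{+,\alpha_{k}}$ (odd degree at site $j$, even degree elsewhere), then verifies by direct enumeration that the odd factor anti-commutes with one or three of the bond terms $\lra{ji}$ while each even factor anti-commutes with an even number (zero or two), and concludes by a parity count that $Q_{j}$ must anti-commute with at least one term of $\widetilde H$. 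The two arguments are primal and dual versions of the same parity fact: pairing your relation $\sum_{k}g_{k}=\sum_{(ij)}\beta_{ij}$ against a putative solution $v$ gives $\sum_{(ij)}v\cdot\beta_{ij}=\sum_{k}v\cdot g_{k}=1$, which is precisely the paper's odd total count. What the paper's version buys is an elementary, self-contained case analysis; what yours buys is the isolation of the single global linear relation responsible for the obstruction, a transparent role for trivalence (every site of the closed ladder meets exactly one bond of each of the three types, so every vertex degree is the odd number $3$), and a uniform explanation of why the obstruction evaporates for the open ladder, where the boundary sites have degree $2$ --- consistent with the explicit construction of $Q_{j}$ in the proof of Theorem \ref{prop:multiplicity}.
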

\begin{proof}
Each site $k$ in the ladder gives rise to a $4$-dimensional  Hilbert space $\HH_{k}$.  There are $16$ linearly-independent operators on $\HH_{k}$, and this space is spanned by monomials $M_{k}^{\alpha}$ in the $b_{k}^{x,y,z}$ or $c_{k}$ of degree 4 or less. Of these, four monomials that we denote $m_{k}^{1,\ldots,4}$ are the Majoranas themselves and have degree $1$, and four others $m_{k}^{1,\ldots,4}\gamma_{k}^{5}$ have degree $3$.  We write these eight odd degree monomials as $M_{k}^{-,\alpha}$.  Each $M_{j}^{-,\alpha}$ anti-commutes with $\gamma_{j}^5$ and commutes with $\gamma_{k}^{5}$ for $k\neq j$. 

There are eight monomials $M_{k}^{+,\alpha}$ of degree $0$, $2$, or $4$, and these commute with all the  $\gamma_{j}^{5}$. All $16$ of the  $M_{k}^{\pm, \alpha}$  commute with $\gamma_{k'}^{5}$ for $k'\neq k$. The monomials in the $b$'s and $c$'s are linearly independent and span the operators on $\HH$, as shown in Proposition 1 of \cite{JP}. From these properties, we infer  that 
	\be\label{Qj}
	Q_{j}=\pm\, M_{j}^{-,\alpha}\prod_{k\neq j}M_{k}^{+,\alpha_k}\;.
	\ee

We now consider further restrictions  on $Q_{j}$, imposed by the fact that  one wants $\comm{Q_{j}}{\H}=0$.  We show this is impossible for $Q_{j}$ of form \eqref{Qj}.  These restrictions use the assumption that all  $J_{(ij)}\neq0$, so they do not apply in the case of an open ladder.   

Let us denote the interaction on bond $(ji)$ by  $\lra{ji}$, so the Hamiltonian \eqref{eq:HamiltonianExtended} can be written 
	\be\label{HNewNotation}
		\H
		= \sum_{(ji)} \lra{ji}\;,
		\qquad\text{where}\quad
		\lra{ji} = J_{(ji)}\,u_{ji}\,ic_{j}c_{i}
		 =- J_{(ji)}\,b_{j}^{(ji)}\,b_{i}^{(ji)}\,c_{j}c_{i}\;.
	\ee
We claim that 
\begin{enumerate}
\item[I.] $M_{j}^{-,\alpha}$ anti-commutes with either one or three terms in the sum \eqref{HNewNotation}.  

\item[II.]  $\prod_{k\neq j}^{3} M_{k}^{+,\alpha_{k}}$ anti-commutes with an even number of terms in \eqref{HNewNotation}.
\end{enumerate}
These two properties show that $Q_{j}$ of the form \eqref{Qj} cannot commute with $\H$.

In order to establish property (I), notice that a single Majorana  $c_{j}$ anti-commutes with three terms $\lra{ji}$ in the sum \eqref{HNewNotation}, where $i$ are the  three nearest neighbors to $j$.  Also the Majorana $b_{j}^{x,y,z}$ anti-commutes with one such term. As $\gamma_{j}^{5}$ commutes with $\lra{ji}$, the same anti-commutativity properties hold for $m_{j}^{1,\ldots,4}\gamma_{j}^{5}$ as for $m_{j}^{1,\ldots,4}$.

Property (II) also follows by considering the anti-commutation properties of the eight possible $M_{k}^{+,\alpha_{k}}$.  The identity and monomial of degree $4$ commute with each $\lra{ji}$. The monomials $M_{k}^{+,\alpha_{k}}$ of degree $2$ all anti-commmute with two of the $\lra{ji}$'s.   The statement then follows. 
$\hfill\qed$
\end{proof}

\section{Ladder Hamiltonians and Reflections \label{sect:Ladders and Reflections}}
In the following we consider ladder Hamiltonians $\H$ of the form \eqref{eq:HamiltonianExtended}  with reflection-symmetric absolute value of the couplings $J_{(ij)}$, namely
	\be\label{eq:ref_sym_couplings}
		\abs{J_{(\vartheta i\, \vartheta j)}}=\abs{J_{(ij)}}\,.
	\ee
We determine the value of reflection-symmetric vortex loops in the ground states of $\H$ and $H$ for such couplings.

The open or closed ladder in Fig.~\ref{fig:Ladder} satisfies \eqref{eq:ref_sym_couplings} in three cases:

\smallskip
\noindent{\bf Case I. Reflection through a horizontal plane, see Fig.~\ref{fig:LadderI}}.
We make no restriction on the couplings $J_{(i\,i+1)}$ on vertical bonds.

\noindent{{\bf Case II. Vertical reflection plane bisecting an open ladder, see Fig.~\ref{fig:LadderII}.}

\noindent{{\bf Case III. Reflection through any vertical plane bisecting a closed ladder, see Fig.~\ref{fig:LadderIII}.} The dotted reflection plane intersects the ladder twice.

\begin{figure}[h!]
	\centering
		\includegraphics[width=0.9\textwidth]{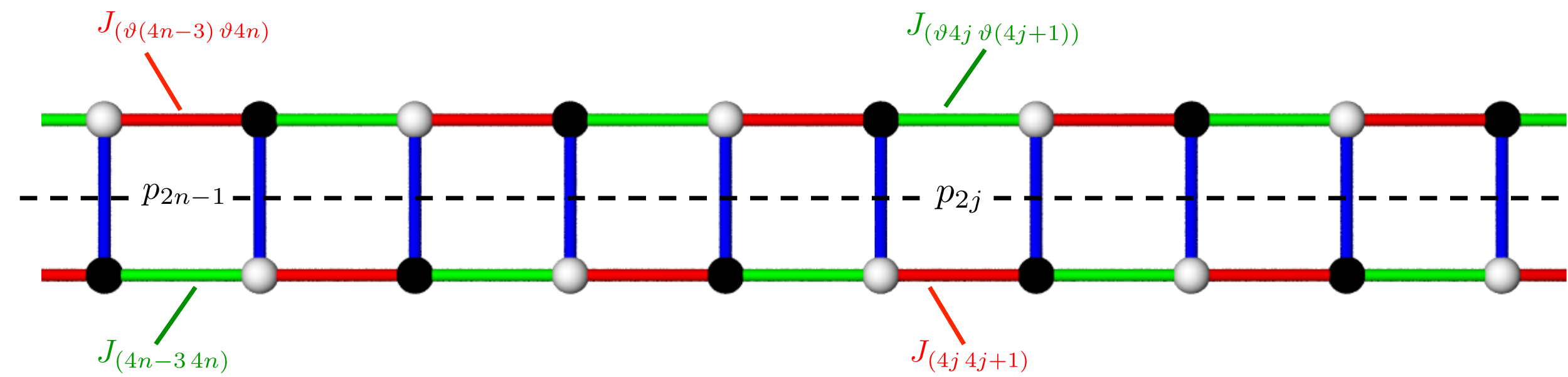}
	\caption{Case I: Horizontal reflection plane for an open or closed ladder.}
	\label{fig:LadderI}
\end{figure}

	\begin{figure}[h!]
	\centering
		\includegraphics[width=0.9\textwidth]{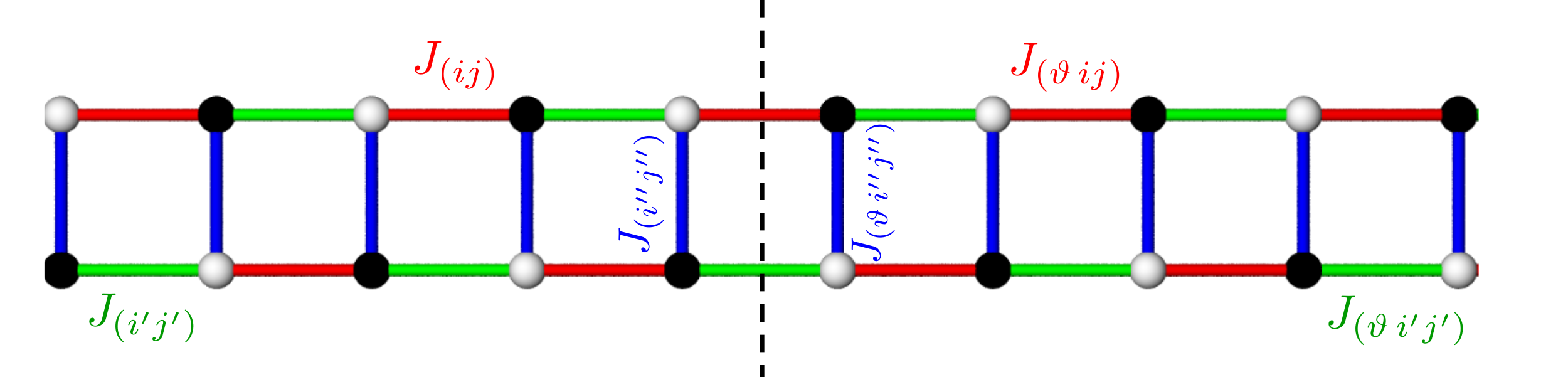}
\caption{Case II: Vertical reflection plane. Recall that $J_{(ij)}=J_{(ji)}$.}
	\label{fig:LadderII}
\end{figure}

\begin{figure}[h!]
	\centering
		\includegraphics[width=0.9\textwidth]{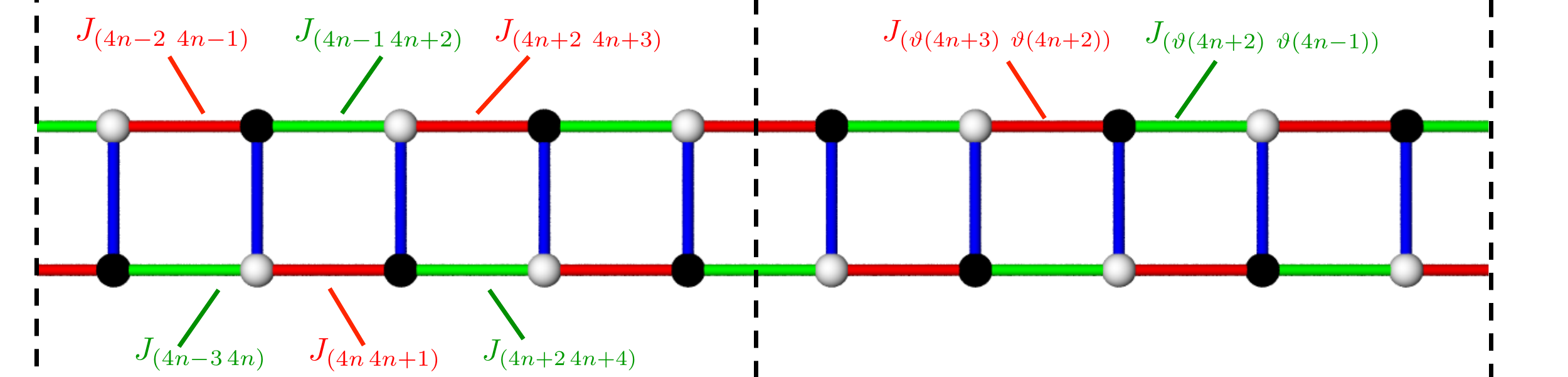}
	\caption{Case III: Vertical reflection of a closed ladder.}
	\label{fig:LadderIII}
\end{figure}

\subsection{Vortex Loops and Reflection-Symmetric Ladders}\label{sec:Ref_symm_case}
In this section we apply Theorem \ref{thm:main_theorem} to characterize the vortex configurations of the ground-state of $\H$, for ladders satisfying conditions \eqref{eq:ref_sym_couplings}.
\begin{thm}\label{prop:Vortex_Closed_Spin_Ladder}
Let $\H$ denote a fermionic ladder Hamiltonian of the form \eqref{eq:HamiltonianExtended} satisfying condition \eqref{eq:ref_sym_couplings} with respect to a reflection $\vartheta$ and a reflection plane $\Pi$. Let $\mathfrak{C}$ denote a non-degenerate, reflection-symmetric loop. Then the vortex configuration of $\mathfrak{C}$ in the ground state of $\H$ is
	\be\label{eq:Theorem_Vortex_Contour}
	\widetilde{{\mathfrak{B}}}(\mathfrak{C})
	=\text{\rm sgn}\lrp{\prod_{\lrp{ij}\in \mathfrak{C}}J_{(ij)}}\,.
	\ee
In case the couplings have all the same sign, then $\widetilde{{\mathfrak{B}}}(\mathfrak{C})=+1$ and the loop $\mathfrak{C}$ is vortex-free.
\end{thm}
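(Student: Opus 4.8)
The plan is to reduce the statement directly to Theorem~\ref{thm:main_theorem} by absorbing the signs of the couplings into the $u$-variables. The starting point is that the fermionic ladder Hamiltonian $\H$ in \eqref{eq:HamiltonianExtended} has exactly the form \eqref{eq:u_Hamiltonian}, with two differences: here the $u_{ij}$ are mutually commuting operators rather than scalars, and the couplings $J_{(ij)}$ are real rather than nonnegative. Since the $u_{ij}$ commute with one another and with $\H$, the Hamiltonian block-diagonalizes over their joint eigenspaces; on the eigenspace where $u_{ij}=\epsilon_{ij}\in\{\pm1\}$ it restricts to $\sum_{(ij)}J_{(ij)}\,\epsilon_{ij}\,i\,c_{i}c_{j}$ acting on $\HH_{c}$. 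Hence the ground-state energy of $\H$ equals $\min_{\epsilon}\widetilde E_{0}(\epsilon)$, and its ground state lies in the eigenspace $\epsilon=\epsilon_{*}$ realizing this minimum, on which $\widetilde{\mathfrak{B}}(\mathfrak{C})=-\prod_{(ij)\in\mathfrak{C}}(\epsilon_{*})_{ij}$ takes a definite value.

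First I would perform the sign absorption. Put $s_{ij}=\text{sgn}(J_{(ij)})$, which is symmetric because $J_{(ij)}=J_{(ji)}$, and set $\tilde u_{ij}=s_{ij}\,u_{ij}$. Then $\tilde u_{ij}=-\tilde u_{ji}$ and $\tilde u_{ij}^{2}=1$, so the $\tilde u$'s form a legitimate family of sign variables in bijection with the $u$'s, and $\H=\sum_{(ij)}\abs{J_{(ij)}}\,\tilde u_{ij}\,i\,c_{i}c_{j}$ is of the form \eqref{eq:u_Hamiltonian} with the nonnegative couplings $\abs{J_{(ij)}}$. By hypothesis \eqref{eq:ref_sym_couplings} these magnitudes are reflection-symmetric, and $\mathfrak{C}$ is non-degenerate and reflection-symmetric, so every hypothesis of Theorem~\ref{thm:main_theorem} is satisfied by the reformulated Hamiltonian.

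Theorem~\ref{thm:main_theorem} then forces the minimizing configuration, read in the $\tilde u$ variables, to be vortex-free: $-\prod_{(ij)\in\mathfrak{C}}\tilde u_{ij}=1$. Translating back through $u_{ij}=s_{ij}\,\tilde u_{ij}$ yields
\[
	\widetilde{\mathfrak{B}}(\mathfrak{C})
	=-\prod_{(ij)\in\mathfrak{C}}u_{ij}
	=-\Big(\prod_{(ij)\in\mathfrak{C}}s_{ij}\Big)\Big(\prod_{(ij)\in\mathfrak{C}}\tilde u_{ij}\Big)
	=\prod_{(ij)\in\mathfrak{C}}s_{ij}
	=\text{sgn}\Big(\prod_{(ij)\in\mathfrak{C}}J_{(ij)}\Big),
\]
which is the asserted vortex configuration. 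For the same-sign case I would use that the ladder is bipartite, so every loop has even length $\abs{\mathfrak{C}}$; if all $J_{(ij)}$ share one sign then $\text{sgn}(\prod J_{(ij)})=(\pm1)^{\abs{\mathfrak{C}}}=+1$, and $\mathfrak{C}$ is vortex-free.

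Since the analytic content — the reflection-positivity bound and its use to minimize the ground-state energy over the family — is entirely carried by Theorem~\ref{thm:main_theorem}, the present argument is essentially bookkeeping. The one point that deserves care is verifying that restricting $\H$ to a joint $u$-eigenspace genuinely reproduces the scalar family $\{\H_{u}\}$ of \eqref{eq:u_Hamiltonian}, and that the reflection $\vartheta$ acting on $\HH_{c}$ by $c_{j}\mapsto c_{\vartheta j}$ is indeed the one appearing in Theorem~\ref{thm:main_theorem}. This follows from the mutual commutativity of the $u_{ij}$ with $\H$ recorded earlier, together with the fact that the sign factors $s_{ij}$ are symmetric and hence compatible with $\vartheta$.
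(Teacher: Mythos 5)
Your proposal is correct and follows essentially the same route as the paper: both decompose $\H$ into the scalar family $\{\H_{u}\}$ via the joint eigenspaces of the commuting $u_{ij}$'s and then invoke Theorem~\ref{thm:main_theorem}, handling the signs of the couplings by trading a sign of $J_{(ij)}$ for a sign of $u_{ij}$ (you absorb the signs before applying the theorem, the paper flips them afterward --- the same bookkeeping in a different order). Your bipartiteness remark for the all-negative case is a nice explicit touch that the paper leaves implicit.
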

\color{black}
\begin{cor}[Vortex-free ladders]\label{rem:vortex-free}
For closed ladders with homogeneous couplings which all have the same sign, every loop $\mathfrak{C}$  is vortex-free in the ground state of $\H$.
\end{cor}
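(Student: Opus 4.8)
The plan is to reduce the statement to the single-plaquette case already settled by Theorem~\ref{prop:Vortex_Closed_Spin_Ladder}, and then to propagate vortex-freeness to an arbitrary loop using the multiplicative structure of $\widetilde{\mathfrak{B}}$ in the mutually commuting operators $u_{ij}$, namely $\widetilde{\mathfrak{B}}(\mathfrak{C})=-\prod_{(ij)\in\mathfrak{C}}u_{ij}$.

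First I would show that every fundamental and every connecting plaquette is vortex-free. Each such plaquette has two sites in the top row and two in the bottom row, and its two vertical (type-$z$) rungs are precisely the bonds crossing the horizontal plane of Case~I. Thus $\partial p$ is a reflection-symmetric loop with respect to the horizontal reflection $\vartheta$, it is non-degenerate since homogeneous couplings are nonzero, and the magnitudes $\abs{J_{(ij)}}$ are reflection-symmetric. Theorem~\ref{prop:Vortex_Closed_Spin_Ladder} then gives $\widetilde{\mathfrak{B}}(\partial p)=\text{sgn}(\prod_{(ij)\in\partial p}J_{(ij)})=+1$, because all couplings share a single sign.

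Next I would handle the contractible loops. In a width-two ladder a simple contractible loop $\mathfrak{C}$ necessarily bounds a run of consecutive plaquettes $p_a,\ldots,p_b$, and since the plaquettes form a one-dimensional chain, each adjacent pair shares exactly one bond, so the enclosed region has $m-1$ interior bonds when $m=b-a+1$. Using $u_{ij}=-u_{ji}$ and $u_{ij}^{2}=1$, each interior bond contributes a factor $-1$, giving $\prod_{(ij)\in\mathfrak{C}}u_{ij}=(-1)^{m-1}\prod_{k}\prod_{(ij)\in\partial p_k}u_{ij}$. Combining this with $\widetilde{\mathfrak{B}}(\mathfrak{C})=-\prod_{(ij)\in\mathfrak{C}}u_{ij}$ and $\widetilde{\mathfrak{B}}(\partial p_k)=+1$, the accumulated signs cancel and one obtains $\widetilde{\mathfrak{B}}(\mathfrak{C})=\prod_{k}\widetilde{\mathfrak{B}}(\partial p_k)=+1$, so every contractible loop is vortex-free. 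Finally, for the non-contractible loops, observe that on the closed (cylindrical) ladder a simple loop that fails to be contractible winds once around and is $\mathbb{Z}_2$-homologous to the top rail $\mathfrak{C}_T$. The top rail is a reflection-symmetric loop for the vertical reflection of Case~III, meeting the plane $\Pi$ in two crossing bonds, so Theorem~\ref{prop:Vortex_Closed_Spin_Ladder} gives $\widetilde{\mathfrak{B}}(\mathfrak{C}_T)=+1$; for a general winding loop $\mathfrak{C}'$ the sum $\mathfrak{C}'+\mathfrak{C}_T$ bounds a union of plaquettes, and the same multiplicativity identity expresses $\widetilde{\mathfrak{B}}(\mathfrak{C}')$ as $\widetilde{\mathfrak{B}}(\mathfrak{C}_T)$ times a product of plaquette vortices, all equal to $+1$.

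I expect the main obstacle to be the non-contractible loops together with the accompanying sign bookkeeping. Unlike the plaquettes, winding loops are not boundaries of plaquette unions, so they demand their own reflection (Case~III) before the multiplicative argument can bridge from $\mathfrak{C}_T$ to an arbitrary winding loop, and in that general bridging step the interior-bond parity must be checked rather than read off. The clean part is the contractible case, where the essential simplification is the one-dimensional arrangement of the ladder plaquettes: it forces the interior-bond count to be exactly $m-1$ and lets the explicit $-1$ in the definition of $\widetilde{\mathfrak{B}}$ cancel without residue.
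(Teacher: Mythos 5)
Your first two steps are correct, and they amount to a fleshed-out version of the paper's own (one-sentence) proof: the paper argues that every plaquette is reflection-symmetric and hence vortex-free, ``and the same then follows for the loop $\mathfrak{C}$.'' You go beyond the paper by noticing that winding cycles are not generated by plaquettes and require the Case~III reflection for the rail. Your sign bookkeeping for a run of $m$ plaquettes is also right: the $m$ minus signs from definition \eqref{eq:vortex_loop}, the $m-1$ factors $u_{ij}u_{ji}=-1$ on interior bonds, and the single minus sign for the outer boundary combine to give $\widetilde{\mathfrak{B}}(\mathfrak{C})=\prod_{k}\widetilde{\mathfrak{B}}(\partial p_{k})$, so all contractible loops are indeed vortex-free.

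The gap is in the final bridging step, exactly at the point you flagged as ``must be checked rather than read off''---and the check fails. Gluing two loops along $s$ shared bonds gives $\widetilde{\mathfrak{B}}(\mathfrak{C}_{1}\triangle\,\mathfrak{C}_{2})=(-1)^{s+1}\,\widetilde{\mathfrak{B}}(\mathfrak{C}_{1})\,\widetilde{\mathfrak{B}}(\mathfrak{C}_{2})$: one $-1$ per shared bond from $u_{ij}u_{ji}=-1$, and three $-1$'s from the three occurrences of \eqref{eq:vortex_loop}. Multiplicativity therefore holds only when $s$ is odd (your contractible case glues along single bonds, which is why it works). A winding loop $\mathfrak{C}'$ that follows the top rail except for dipping under a run of $m$ consecutive plaquettes is $\mathfrak{C}_{T}$ glued to the run boundary along $s=m$ top bonds, so $\widetilde{\mathfrak{B}}(\mathfrak{C}')=(-1)^{m+1}\widetilde{\mathfrak{B}}(\mathfrak{C}_{T})\prod_{k}\widetilde{\mathfrak{B}}(\partial p_{k})$, which is $-1$, not $+1$, when $m$ is even and all factors on the right are $+1$. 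Concretely, on the $N=2$ closed ladder with top sites $t_{1},\dots,t_{4}$ and bottom sites $b_{1},\dots,b_{4}$, the four loops $\mathfrak{C}_{T}$, $\partial p_{1}$, $\partial p_{2}$, and $\mathfrak{C}'=\{t_{1},b_{1},b_{2},b_{3},t_{3},t_{4},t_{1}\}$ cover each of their bonds exactly twice (with the natural orientations, four bonds are traversed twice in the same direction and five in opposite directions), so $\widetilde{\mathfrak{B}}(\mathfrak{C}_{T})\,\widetilde{\mathfrak{B}}(\partial p_{1})\,\widetilde{\mathfrak{B}}(\partial p_{2})\,\widetilde{\mathfrak{B}}(\mathfrak{C}')=-1$ identically in every $u$-sector. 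Hence these four loops can never be simultaneously vortex-free, and no repair of the bridging argument is possible: with the traversal-oriented definition \eqref{eq:vortex_loop}, ``every loop is vortex-free'' is unattainable on a closed ladder, and the corollary can only hold for the loops your first two steps actually reach---run boundaries, the rails, and more generally the reflection-symmetric loops covered by Theorem~\ref{prop:Vortex_Closed_Spin_Ladder}. Note that this defect is inherited from the paper: its one-sentence proof elides precisely the same sign, so your instinct about where the difficulty lies was accurate even though the step cannot be completed.
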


\begin{proof}
Each Hamiltonian $\H$ acting on $\HH$ corresponds to $2^{6N}$ Hamiltonians $\H_{u}$  acting on $\HH_{c}$, some of which could be the same. Each $\H_{u}$ arises from a particular choice of $u_{ij}=\pm1$. The eigenvalues of $\H$ are the union of the eigenvalues of these $2^{6N}$ Hamiltonians $\H_{u}$. A gauge transformation of the variables $b_{j}^{x,y,z}$ transforms one $\H_{u}$ into another $\H_{u'}$. This justifies our present study of the individual Hamiltonians $\H_{u}$.  

Each Hamiltonian $\H_{u}$ is of the form \eqref{eq:u_Hamiltonian}, although the couplings $J_{(ij)}$ may not be positive. In case all the $J_{(ij)}>0$, we infer from Theorem \ref{thm:main_theorem} that the minimum energy of $\H$ is achieved for a $\H_{u}$ with a configuration of the $u_{ij}$'s such that
	\be\label{eq:vortex_lopp_in_proof}
 		\widetilde{\mathfrak{B}}(\mathfrak{C})
		=-\prod_{(ij)\in\mathfrak{C}}\,u_{ij}
		=1\,,
	\ee
for any loop $\mathfrak{C}$ that is reflection-symmetric.
Changing the sign of $J_{(ij)}$ with $(ij)\in\mathfrak{C}$ is equivalent to changing the sign of the corresponding $u_{ij}$, so one infers from  \eqref{eq:vortex_lopp_in_proof} that
	\be
		\widetilde{\mathfrak{B}}(\mathfrak{C})
		=-\prod_{(ij)\in\mathfrak{C}}\,u_{ij}
		=\text{sgn}\lrp{\prod_{(ij)\in\mathfrak{C}}\,J_{(ij)}}\,.
	\ee
This completes the proof of the proposition.  The corollary follows as every  plaquette in the ladder is reflection-symmetric and hence vortex free, and the same then follows for the loop $\mathfrak{C}$. 
$\hfill\qed$
\end{proof}

\subsection{Implications for Reflection-Symmetric Spin Ladders}
For open ladders, we know that the ground-state energies of $\H$ and $H$ agree, as shown in Theorem~\ref{prop:multiplicity}. We also know that the projection $P$ commutes with all the vortex operators, see Proposition \ref{prop:productu}. On the other hand, in the case of a closed ladder we do not know whether the spectra coincide, and in particular whether the ground-state energies are the same. We have shown the following:
\begin{thm}
The ground-state of the Hamiltonian $H$ for an open spin ladder satisfying condition \eqref{eq:ref_sym_couplings} with respect to a reflection plane $\Pi$ has the vortex configuration 
	\be\label{eq:Vortex_Loop_H}
	{\mathfrak{B}}(\mathfrak{C})
	=\text{\rm sgn}\lrp{\prod_{\lrp{ij}\in\mathfrak{C}}J_{(ij)}}\,,
	\ee
in each non-degenerate, reflection-symmetric loop $\mathfrak{C}$ that crosses $\Pi$. In case the couplings have all the same sign, the ground-state is vortex-free in those loops.
\end{thm}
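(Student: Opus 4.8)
The plan is to reduce this statement about the spin Hamiltonian $H$ to the corresponding statement about its fermionic representation $\widetilde H$, which has already been established in Theorem~\ref{prop:Vortex_Closed_Spin_Ladder}. The key structural facts are assembled earlier in the paper and need only be combined. First I would invoke Theorem~\ref{prop:multiplicity}, which asserts that for an \emph{open} ladder the spectrum of $H$ coincides (aside from multiplicity) with that of $\widetilde H$; in particular the two ground-state energies agree, $E_0 = \widetilde E_0$. This is precisely the place where restricting to open ladders is essential, since Proposition~\ref{prop:monomials} shows the analogous operators $Q_j$ do not exist for the closed ladder.

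Next I would use the fact that the vortex operators are conserved and compatible with the projection. By Proposition~\ref{prop:productu} and its corollaries, each $\widetilde{\mathfrak{B}}(\mathfrak{C})$ commutes with $\H$, with $P$, and with the other vortex operators, and its projection $P\,\widetilde{\mathfrak{B}}(\mathfrak{C})\,P$ agrees with the spin vortex ${\mathfrak{B}}(\mathfrak{C})$ on $\mathcal H$. Thus $H$ can be simultaneously diagonalized with all the ${\mathfrak{B}}(\mathfrak{C})$, and a ground state of $H$ lies in a definite eigenspace of each ${\mathfrak{B}}(\mathfrak{C})$. The content of the theorem is to identify \emph{which} eigenvalue $\pm1$ occurs.

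The identification comes directly from Theorem~\ref{prop:Vortex_Closed_Spin_Ladder}, which already computes the vortex configuration of the ground state of $\widetilde H$ for reflection-symmetric couplings satisfying \eqref{eq:ref_sym_couplings}, giving $\widetilde{\mathfrak{B}}(\mathfrak{C}) = \text{sgn}\bigl(\prod_{(ij)\in\mathfrak{C}} J_{(ij)}\bigr)$ on any non-degenerate, reflection-symmetric loop crossing $\Pi$. I would argue as follows: let $\widetilde\Omega$ be the ground state of $\widetilde H$ realizing this vortex value; by the proof of Theorem~\ref{prop:multiplicity} one may apply the operators $Q_j$ to bring all $\gamma_j^5$-eigenvalues to $+1$ without changing the energy, producing a vector $\widetilde\Omega'$ in the range of $P$. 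Since the $Q_j$ are products of the $b$'s and $u_{ij}$'s that commute with every $\widetilde{\mathfrak{B}}(\mathfrak{C})$ (each vortex being a product of $u_{ij}$'s, which mutually commute), the vortex value is preserved, and $\Omega' = P\widetilde\Omega'$ is a ground state of $H$ with the same vortex configuration. Because $E_0 = \widetilde E_0$, this $\Omega'$ is genuinely a ground state of $H$, so ${\mathfrak{B}}(\mathfrak{C})$ takes the asserted sign in the ground state of $H$. The final sentence then follows since same-sign couplings make the product positive.

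The main obstacle I anticipate is not any single estimate but the bookkeeping needed to verify that the projection/symmetrization step truly preserves the vortex eigenvalue: one must check that the $Q_j$ operators built from paths $\Gamma$ commute with each reflection-symmetric vortex $\widetilde{\mathfrak{B}}(\mathfrak{C})$. This is where I would spend the most care, confirming via Proposition~\ref{prop:productu} that both are expressible in the mutually commuting $u_{ij}$'s (together with the harmless $b^x_{4N}$ factor in $Q_j$, which commutes with the $u_{ij}$'s), so that passing from the $\widetilde H$-ground state to the projected $H$-ground state leaves $\widetilde{\mathfrak{B}}(\mathfrak{C})$ unchanged. Everything else is a direct citation of the open-ladder spectral equivalence and the already-proven fermionic vortex formula.
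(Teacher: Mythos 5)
Your proposal is correct and follows essentially the same route as the paper: the paper states this theorem without a separate proof, deriving it precisely by combining Theorem~\ref{prop:multiplicity} (equality of open-ladder spectra, hence of ground-state energies), Proposition~\ref{prop:productu} (the vortex operators commute with $P$ and project onto the spin vortices), and Theorem~\ref{prop:Vortex_Closed_Spin_Ladder} (the vortex value in the ground state of $\H$), exactly as you do. Your additional bookkeeping---checking that the $Q_j$ preserve the vortex eigenvalues while symmetrizing the $\gamma_j^5$'s---is sound (the $u_{ij}$'s mutually commute, and $b^x_{4N}$ commutes with every $u_{ij}$ on a non-degenerate loop of the open ladder, since such a loop never uses the type-$x$ bond at site $4N$) and merely makes explicit what the paper leaves implicit.
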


\setcounter{equation}{0}
\section{Numerical Evidence}\label{sec:Numerical_Evidence}
In this section we give some numerical evidence for the spectral properties of $H$ and $\H$, both in the case of open and of closed ladders.  We have  shown in Theorem \ref{prop:multiplicity} that 
 the spectra of $H$ and $\H$ are identical for an open ladder.  However this is not true for a closed ladder. Even a simple closed ladder with $N=2$ (four plaquettes)  shows by explicit numerical diagonalization that $\H$ has eigenvalues not present in the spectrum of $H$, see \S\ref{sec:Numerical Closed Ladders}.  For this Hamiltonian, we plot the energies and show the vortex configurations for a number of eigenvalues. 
 
 We inspect the low-lying spectrum of the Hamiltonians $H$ and $\H$ for a number of ladders of length $N$, in case that  $N$ is as large as $100$, so with up to $400$ spins and $1,600$ Majoranas. We use Mathematica 8.0.4.0 and Matlab 7.10.0.499 (R2010a). In order to find which eigenvalues of $\H$ are eigenvalues of $H$, we use the method introduced in \cite{PCL}.

Our numerical analysis suggests that the ground state of $\H$ and also the ground state of $H$  is vortex free, whether or not they have the symmetry \eqref{eq:ref_sym_couplings}, leading to the conjecture at the end of the section.

\subsection{Open Ladders}\label{sec:numerical_open}
We first analyze an open ladder with $N=2$ (three plaquettes). 
In Fig.~\ref{fig:Numerical_Simulation_5} we plot the low-lying eigenvalues of both  $H$ and $\widetilde{H}$.  
\begin{figure}[h!]
	\centering
		\includegraphics[width=0.87\textwidth]{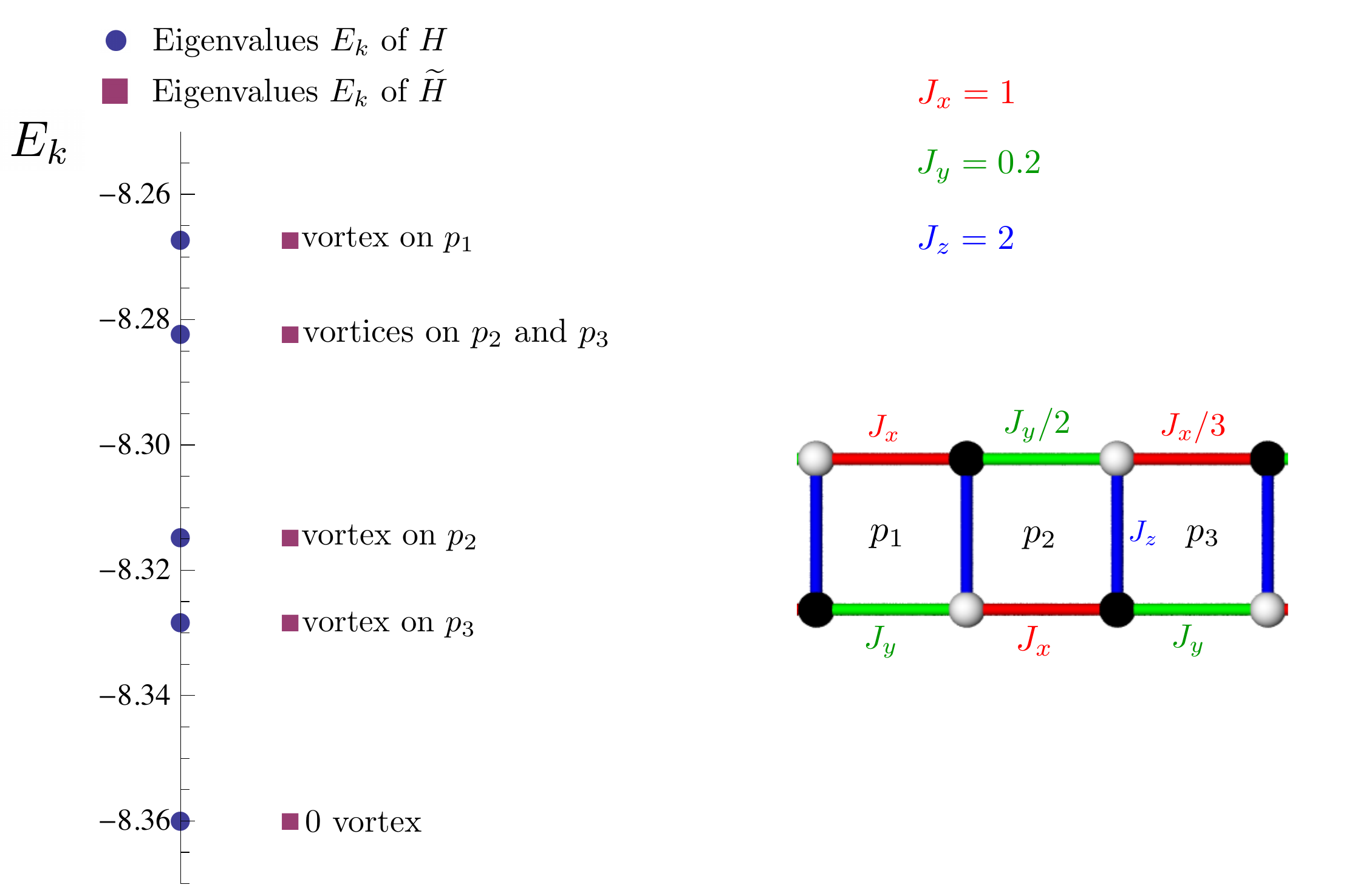}
	\caption{Low-lying eigenvalues of $H$ and $\widetilde H$ for an open ladder with the illustrated couplings. We plot eigenvalues of $H$ with circles and those of $\H$ with squares, and we ignore multiplicities. Other couplings yield qualitatively similar plots.}  
	\label{fig:Numerical_Simulation_5}
\end{figure}
\begin{figure}[h!]
	\centering
		\includegraphics[width=0.9\textwidth]{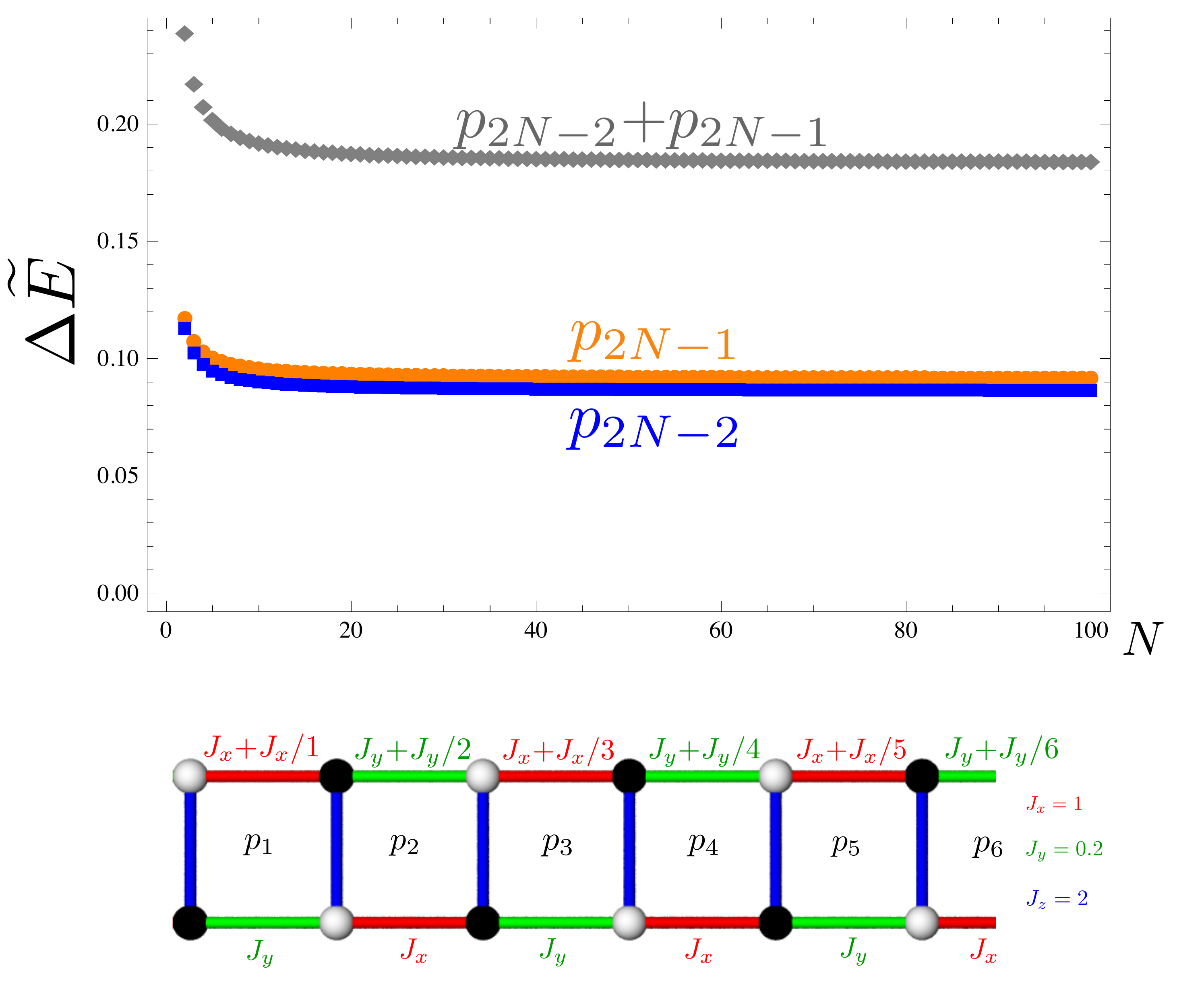}
	\caption{We plot the lowest energies of excitation for an open ladder of length $N$, with a vortex on plaquette $p_{2N-2}$, on plaquette $p_{2N-1}$, or on both. The choice of the couplings is shown in the figure, and the vortex configurations are explained in the text.}
	\label{fig:Vortex_Energy_All_Open}
\end{figure}
We have chosen the couplings $J_{z}=2$, and the $x$ and $y$ couplings to decrease from left to right on the top of the ladder, but not on the bottom. These couplings are neither reflection-symmetric nor homogeneous. The plaquettes are labeled $p_{n}$ with $n=1,2,3$ from left to right as illustrated in Fig.~\ref{fig:Numerical_Simulation_5}.
The numerical eigenvalues of $H$ and $\widetilde{H}$ agree, as we already have shown in Theorem \ref{prop:multiplicity}.  It is interesting that the one-vortex configurations yield the first excited states (aside from multiplicity) and the placement on the ladder  of the vortex   that creates the minimal-energy excitation corresponds to the configuration of coupling constants that one intuitively expects. 

\subsubsection{Hamiltonians for Open Ladders of Length $N$}
Next we consider a sequence of Hamiltonians $\H$ for open ladders with variable length $N$.  We choose non-homogeneous couplings that decay on the upper rungs of the ladder from $2J_{x}$ and $3J_{y}/2$ on the left, to $J_{x}+J_{x}/(2N-1)$ and $J_{y}+J_{y}/(2N-2)$ on the right. On the bottom rungs we take homogeneous couplings. We plot the case $J_{x}=1$, $J_{y}=0.2$, and $J_{z}=2$, as illustrated in Fig.~\ref{fig:Vortex_Energy_All_Open}.
We find that the ground state energy corresponds to a vortex-free configuration.  We then consider the minimal energy excitation above the ground state (neglecting multiplicity).

Among the configuration we have tested, the minimal energy excitation above the vortex-free configuration appears to occur with a single vortex on a plaquette $p_{2N-j}$ for small $j$. The effect of the boundary of the ladder at plaquette $p_{2N-1}$ seems to raise slightly the energy of the single vortex in that plaquette, as illustrated in two curves labeled by $p_{2N-2}$ and $p_{2N-1}$. We have computed other single-vortex excitations that confirm this picture.

We also plot the excitation energy of a configuration with two vortices on plaquettes $p_{2N-2}$ and $p_{2N-1}$. This is approximately twice the energy  of a single vortex.

\subsection{Closed Ladders} \label{sec:Numerical Closed Ladders}
We present numerical evidence for several closed ladders, and contrast the results with the case of the open ladders. In spite of the fact that we observe numerically that $\H$ and $H$ have different spectra, the ground-state energy of $\H$ coincides with the ground-state energy of $H$ and the ground-state vortex-loop configuration is vortex-free.
\begin{figure}[h]
	\centering
		\includegraphics[width=0.9\textwidth]{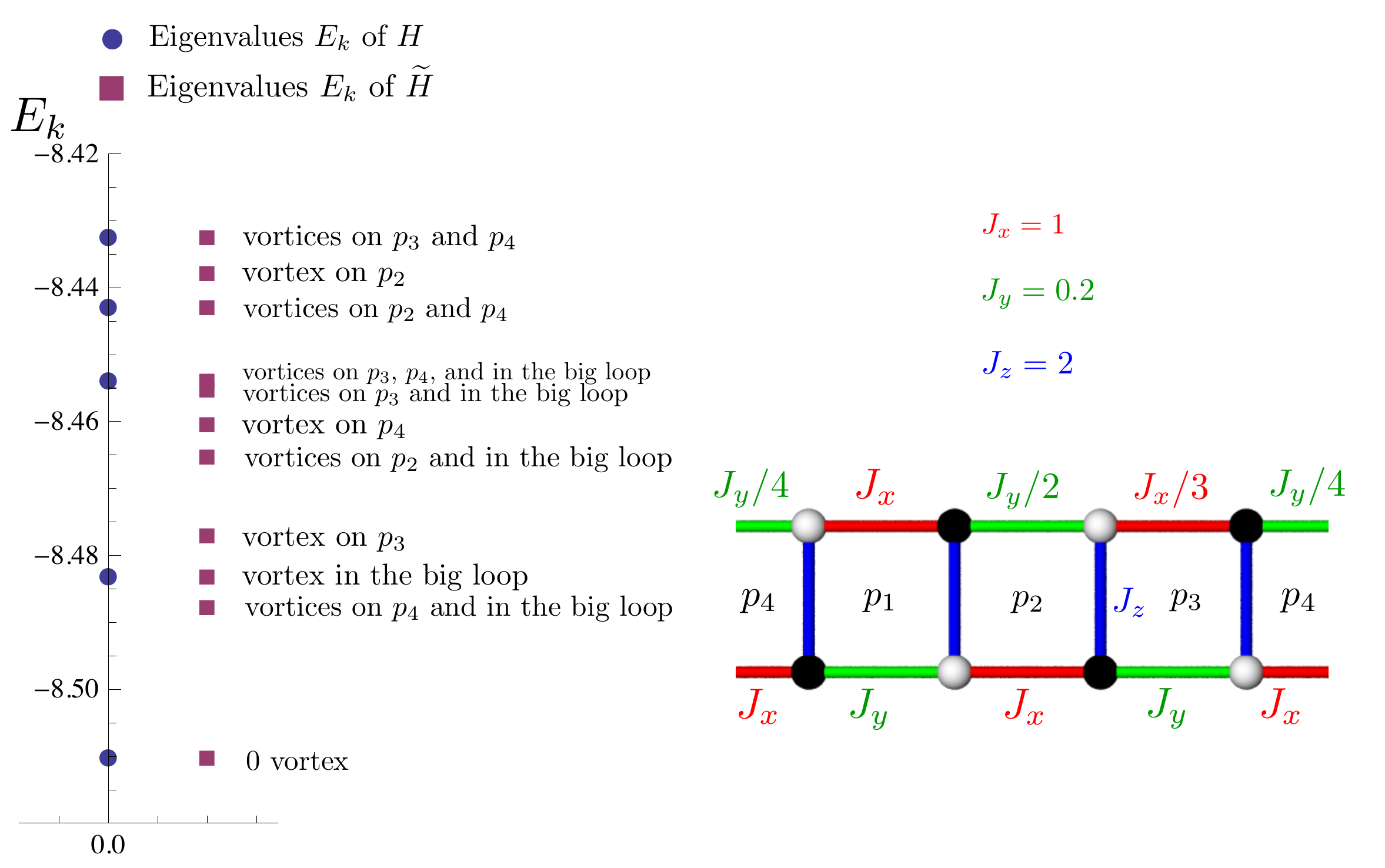}
	\caption{Low-lying eigenvalues of $H$ and $\widetilde H$ for an $N=2$ closed ladder without symmetry. Here $\H$ has eigenvalues that do not occur in $H$. We refer to the ``big loop'' as a loop with four horizontal bonds (around either the top or the bottom of the ladder). As in Fig.~\ref{fig:Numerical_Simulation_5} we ignore multiplicities.}
	\label{fig:Numerical_Simulation_3}
\end{figure}

\subsubsection{Hamiltonians $\H$ and $H$ for Closed Ladders of Length $N=2$}\label{subsec:closed_ladder}
We first analyze the $N=2$ ladder with couplings of the same sort as in Fig. \ref{fig:Numerical_Simulation_5}, but with non-zero couplings on the bonds closing the ladder, as illustrated in Fig.~\ref{fig:Numerical_Simulation_3}. We plot the low-lying eigenvalues of $H$ and $\widetilde{H}$, aside from multiplicity. We label the eigenvalues we plot by their vortex loop configuration. 

\subsubsection{Hamiltonians $\H$ and $H$ for Closed Ladders of Length $N$}\label{sec:closed_ladder_Hamiltonians_Htilde}

Here we consider the two smallest excitations above the ground state of the Hamiltonians $\H$ and $H$ for ladders of variable length $2\leqslant N\leqslant100$. We choose non-homogeneous couplings that decay on the upper rungs of the ladder from $2J_{x}$ and $3J_{y}/2$ on the left, to $J_{x}+J_{x}/(2N-1)$ and $J_{y}+J_{y}/(2N)$ on the right. On the bottom rungs we take homogeneous couplings. We plot the case $J_{x}=1$, $J_{y}=0.2$, and $J_{z}=2$. See Fig.~\ref{fig:Vortex_Energy_All}.

We find that the lowest energy of the configurations we tested is a zero-vortex state. We redefine this energy to be zero. However, we also find that the energy for the state with lowest energy and having a vortex in the big loop, decays rapidly with $N$.  We plot the energy $\Delta \widetilde{E}$ (relative to the vortex-free state) for one vortex in the big loop (BL), two vortices in the big loop and on plaquette $p_{2N}$ (BL+$p_{2N}$), and finally three vortices in the big loop, on $p_{2N-1}$ and $p_{2N}$ (BL+$p_{2N-1}$+$p_{2N}$). The configurations BL and BL+$p_{2N}$ appear to be the lowest-energy excitations of $\H$. By computing the eigenvalues of $H$, we find that the minimal-energy configuration is vortex-free, and the eigenvalue equals the ground state energy of $\H$. However the lowest-energy excitations of $H$ appear to arise from the vortex-loop configurations BL and  BL+$p_{2N-1}$+$p_{2N}$.

\begin{figure}[h!]
	\centering
		\includegraphics[width=0.85\textwidth]{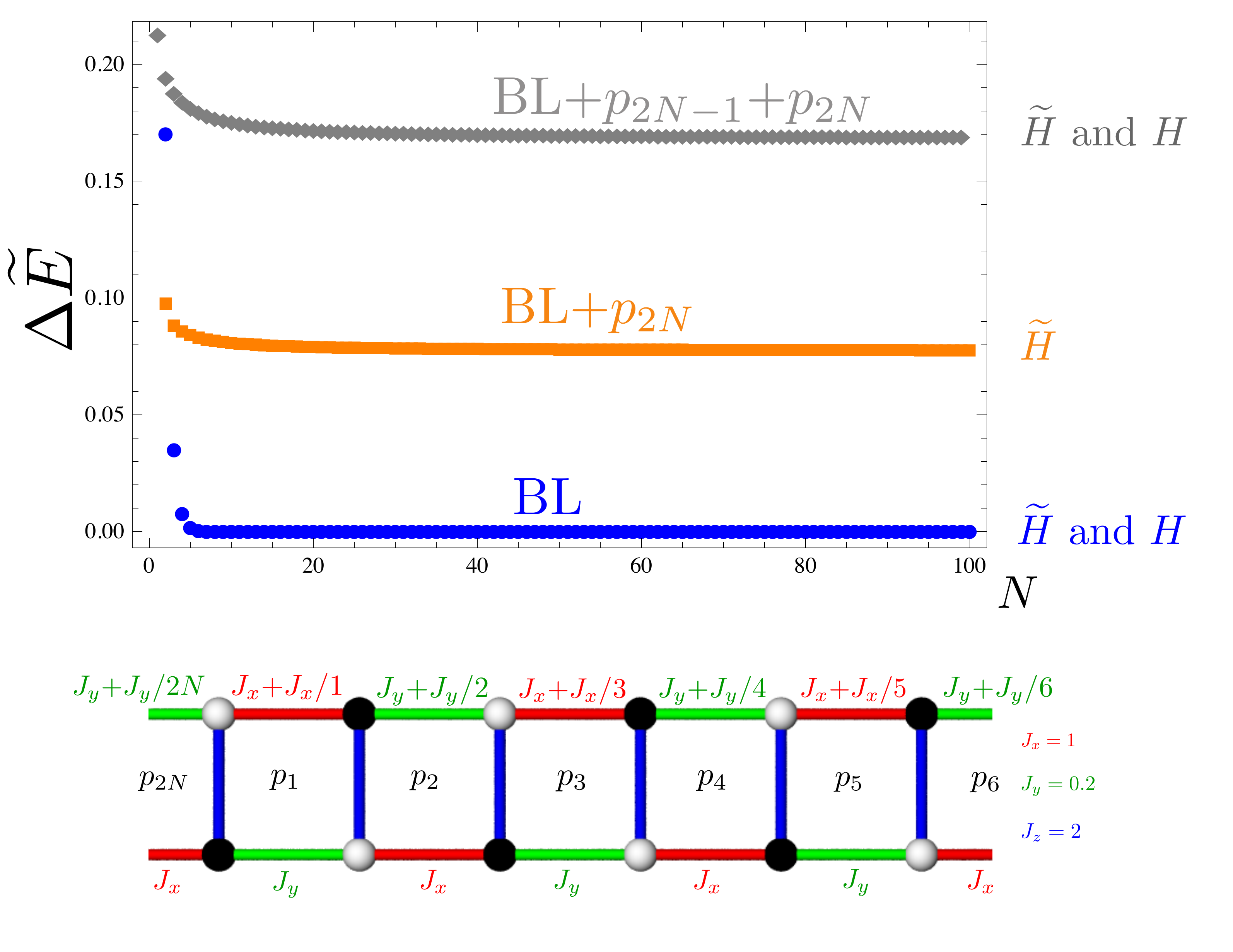}
	\caption{Excitation energies for closed ladders of length $N$ compared with the vortex-free configuration. The choice of the couplings is shown in the figure, and the vortex configurations are explained in the text.}
	\label{fig:Vortex_Energy_All}
\end{figure}

\subsubsection{Remark}
We have performed numerical calculations for different ladder lengths and coupling configurations that we do not show here, but they all result in similar behavior.  

\goodbreak
\setcounter{equation}{0}
\section{Perturbative Results without Reflection Symmetry}\label{sec:non reflection-symm}
\color{red}\color{black}
In the previous sections we found that ladders satisfying \eqref{eq:ref_sym_couplings} and all positive couplings (or all negative couplings) have ground states with no vortex in any reflection-symmetric loop $\mathfrak{C}$. It is of interest to understand whether the vortex-free property extends to open and closed ladders that do not satisfy \eqref{eq:ref_sym_couplings}. Here we investigate this question by perturbation theory, and find evidence that certain ladders have vortex-free ground-state configurations.

We study ladders for which the $x$-couplings are all equal to $J_{x}$, but for which
	\be\label{eq:couplings_perturb}
		J_{x}\gg J_{(i,\,i+3)},\,J_{(i,\,i+1)}\geqslant0\,.
	\ee
For homogeneous couplings with $J_{z}\gg \abs{J_{x}},\abs{J_{y}}$, the ground-state of the open ladder in lowest-order perturbation theory (depending upon $N$) has been shown to be vortex-free when $J_{x}\,J_{y}>0$ or vortex-full when $J_{x}\,J_{y}<0$ \cite{Vishveshwara2011}. Qualitatively this situation is different from the one we study here, as our perturbation satisfying \eqref{eq:couplings_perturb} gives a vortex contribution to the energy only in third order perturbation theory, rather than in second order. On the other hand the perturbation theory evidence in \cite{Vishveshwara2011} that the ground state is vortex-free or vortex-full agrees with Conjecture~\ref{conj:1} in \S\ref{sec:conjecture}.

Write the Hamiltonian as
	\be
		H
		=H_{0}+V\,,
	\ee
where
	\be\label{eq:H0_V}
		H_{0}=-J_{x}\sum_{(ij)_{x}}\sigma_{i}^{x}\sigma_{j}^{x}\,,\quad\text{and}\quad V=-\sum_{(ij)_{y}}J_{(ij)}\,\sigma_{i}^{y}\sigma_{j}^{y}-\sum_{(ij)_{z}}J_{(ij)}\,\sigma_{i}^{z}\sigma_{j}^{z}\,,
	\ee
where $(ij)_{x,y,z}$ denotes type-$x,y,z$ bonds. We consider perturbations of $H_{0}$ by $V$. In the case of the open ladder, $\boldsymbol{\sigma}_{1}$ and $\boldsymbol{\sigma}_{4N}$ do not occur in $H_{0}$.

\begin{prop}[\bf Open Ladder] \label{prop:open ladder perturbation}
Assume that $0<J_{(ij)}$ for all bonds $(ij)$. Also assume that there are constants $0<M_{1}$, $0<M_{2}$ such that $J_{(ij)}<M_{1}$ for $y$ and $z$ bonds $(ij)$ and $M_{2}<J_{x}$. Then for $M_{1}/M_{2}$ sufficiently small, the ground state of Hamiltonian \eqref{eq:Hamiltonian} is vortex-free.
\end{prop}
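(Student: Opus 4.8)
The plan is to reduce the spin problem to a free-fermion computation and then run a strong-coupling (dimer) perturbation expansion. By Theorem~\ref{prop:multiplicity} the spectra of $H$ and $\H$ coincide for the open ladder, so the ground-state energy of $H$ equals $\min_{u}\widetilde E_{0}(u)$, the minimum over sign configurations $u$ of the ground-state energies of the quadratic Majorana Hamiltonians $\H_{u}=\sum_{(ij)}J_{(ij)}\,u_{ij}\,i c_{i}c_{j}$ of \eqref{eq:HamiltonianExtended}. Each coefficient in the expansion of $\widetilde E_{0}(u)$ is gauge invariant, hence depends on $u$ only through the fluxes $\widetilde{\mathfrak{B}}(\mathfrak{C})=-\prod_{(ij)\in\mathfrak{C}}u_{ij}$; moreover, by Proposition~\ref{prop:productu} the projection $P$ carries the minimizing fermionic ground state to a spin state with the same vortex configuration. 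Thus it suffices to show that the configuration $u$ minimizing $\widetilde E_{0}(u)$ is vortex-free.

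Next I would split $\H_{u}=\H_{0}+\widetilde V$ matching the decomposition $H=H_{0}+V$ of \eqref{eq:H0_V}, with $\H_{0}$ collecting the $x$-bonds and $\widetilde V$ the $y$- and $z$-bonds. Because each interior site lies on exactly one $x$-bond, $\H_{0}$ is a sum of mutually commuting dimer terms $J_{x}u_{ij}\,ic_{i}c_{j}$, each gapped by $2J_{x}>2M_{2}$, with the two end Majoranas $c_{1},c_{4N}$ contributing only harmless zero modes. Treating $\widetilde V$ (of size $O(M_{1})$) as a perturbation of this gap, one expands $\widetilde E_{0}(u)$ in powers of $M_{1}/M_{2}$. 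The first-order term vanishes, since every weak bond joins two distinct dimers and carries odd parity, and the second-order term is flux-independent: encircling a plaquette requires traversing all four of its bonds, so a two-weak-bond process either retraces a single bond or fails to close a loop, and cannot feel the flux.

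The heart of the argument is third order: the first flux-dependent contribution is a ring-exchange that traverses the four bonds of a plaquette $p$, namely its single $x$-bond (resolved inside the corresponding dimer) together with its three weak bonds (one $y$ and two $z$). Its amplitude is proportional to $J_{y}J_{z}^{2}/J_{x}^{2}$ times the dimer-ground-state expectation of $\prod_{(ij)\in\partial p}(ic_{i}c_{j})$, which by the computation in Proposition~\ref{prop:productu} equals a fixed sign times $\widetilde{\mathfrak{B}}(p)$. Carrying out the Clifford bookkeeping, one finds
	\be
		\widetilde E_{0}(u)=\text{const}-\kappa\,\frac{J_{y}J_{z}^{2}}{J_{x}^{2}}\sum_{p}\widetilde{\mathfrak{B}}(p)+O\bigl((M_{1}/M_{2})^{4}\bigr)\,,
	\ee
with $\kappa>0$. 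Since the plaquette fluxes of an open ladder are independent, this third-order energy is minimized simultaneously by taking $\widetilde{\mathfrak{B}}(p)=+1$ on every plaquette, i.e.\ by the vortex-free configuration.

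Finally I would control the remainder: bounding the flux-dependent part of every order $k\geqslant4$ by $C(M_{1}/M_{2})^{k}$ shows that for $M_{1}/M_{2}$ small the third-order splitting of size $\kappa\,J_{y}J_{z}^{2}/J_{x}^{2}$ dominates, so flipping the flux of any plaquette strictly raises $\widetilde E_{0}(u)$, and the global minimizer is vortex-free. I expect the main obstacle to be precisely this last point, together with pinning down the sign $\kappa>0$: one must verify that the flux dependence organizes, at every order, as a sum of plaquette-local terms (or at least that the vortex-free sector remains separated by a gap of order $J_{y}J_{z}^{2}/J_{x}^{2}$ surviving resummation), so that no \emph{collective} flip of many fluxes can undercut the vortex-free state. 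Getting this uniformity right, rather than evaluating any single diagram, is where the real work lies.
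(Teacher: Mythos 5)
Your route differs from the paper's --- you transfer to the fermionic side via Theorem~\ref{prop:multiplicity} and gauge invariance, and expand each free-fermion energy $\widetilde E_{0}(u)$ around the commuting $x$-bond dimers, while the paper stays in spin space and computes the effective Hamiltonians on the degenerate ground space $\mathcal{P}_{0}$ of $H_{0}$ in \eqref{eq:H0_V} --- but the skeleton is parallel (first order vanishes, second order flux-blind, third order is the first flux-sensitive order), and your transfer of the flux configuration from $\H$ to $H$ is legitimate because the operators $Q_{j}$ and the projection $P$ commute with the vortex operators.

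The genuine gap is that you never establish $\kappa>0$, and that sign is the entire content of the proposition: with $\kappa<0$ your argument, verbatim, would prove the ground state is vortex-\emph{full}. The sign is not routine. In the paper's computation the three operator orderings $V_{z}V_{y}V_{z}$, $V_{z}V_{z}V_{y}$, $V_{y}V_{z}V_{z}$ enter the third-order formula \eqref{eq:SWT_3} with opposite signs (compare \eqref{eq:zyz} and \eqref{eq:zzy}); the bulk parts of the first two cancel identically, and the surviving coefficient --- including the different weights for the boundary plaquettes $p_{1},p_{2N-1}$, where $\boldsymbol\sigma_{1},\boldsymbol\sigma_{4N}$ are absent from $H_{0}$ --- emerges only from that bookkeeping. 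The outcome also genuinely uses the hypothesis $J_{(ij)}>0$: with mixed coupling signs one obtains vortex-full ground states, as in \cite{Vishveshwara2011}. So asserting that the ``Clifford bookkeeping'' gives $\kappa>0$ and deferring it is deferring the proof itself. You also misidentify the residual difficulty: since the proposition allows $M_{1}/M_{2}$ to depend on $N$, there are only finitely many flux sectors, and the remainder beyond third order is controlled simply by analyticity of the eigenvalues in the couplings (the paper invokes \cite{Kato1949,KatoBook}); no order-by-order plaquette locality or resummation uniform in $N$ is required for the statement as given. Uniformity in $N$ is precisely what the paper's Remark concedes is an open problem, so your ``main obstacle'' is not an obstacle here, while the sign computation you postponed is. Two smaller points: for the non-homogeneous couplings allowed here, the third-order coefficient of $\widetilde{\mathfrak{B}}(\partial p_{k})$ is proportional to the product of the three weak couplings on $p_{k}$ (with boundary plaquettes weighted differently), not a single $J_{y}J_{z}^{2}$; and your appeal to Proposition~\ref{prop:productu} for the dimer expectation is misplaced, since the ordered product $\prod_{(ij)\in\partial p}(ic_{i}c_{j})$ is the identity --- the flux enters through the $u_{ij}$'s on the three weak bonds times the dimer ground-state expectation of $ic_{i}c_{j}$ on the $x$-bond.
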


\smallskip
\noindent {\bf Remark.} We believe that in Proposition \ref{prop:open ladder perturbation} one can choose $M_{1}/M_{2}$ sufficiently small, uniformly in $N$. Establishing such a result about the boundedness of the magnitude of differences of eigenvalues of $H$ requires detailed analysis of the local nature of the perturbation. One needs to estimate non-perturbatively the error in the low-energy perturbation analysis, within a small region of couplings bounded by $M_{1}/M_{2}$, uniformly in $N$. Cluster expansions have been used to do this, both in field theory \cite{GJSAnnals} and in lattice systems. For the latter a framework is given in \cite{Datta1996_1,Datta1996} and several related papers. Working out the details to bound the energy differences for the ladder Hamiltonian $H$ remains an interesting project.   
\begin{proof}

First we establish the notation we use. The ground-state eigenspace $\mathcal{P}_{0}$ of the Hamiltonian $H_{0}$ has $2^{2N+1}$ ground states, which we label by the eigenvalues of $\sigma_{j}^{x}$, for $j=1,\ldots,4N$, with the constraint $\sigma_{i}^{x}\sigma_{j}^{x}=+1$ for all bonds $(ij)_{x}$. We use $m$ to denote the set of eigenvalues of $\sigma_{j}^{x}$ for $j=1,\ldots,4N$ that satisfy the constraint. 
Let $\mathcal{P}_{0}^{\perp}=1-\mathcal{P}_{0}$.
Note that $\mathfrak{B}(\partial p_{n})$ commutes with $H_{0}$ and thus with $\mathcal{P}_{0}$.
Decompose the perturbation $V$ in two parts, $V=V_z+V_{y}$ with 
	\be
		V_{z}
		=\sum_{j=1}^{2N}V_{z}^j=-\sum_{j=1}^{2N}J_{(2j-1,\,2j)}\,\sigma_{2j-1}^{z}\sigma_{2j}^{z}\;,
	\ee 
and 
	\be
		V_{y}=\sum_{j=1}^{2N-1}V_{y}^{j}
		=-\sum_{j=1}^{2N-1}J_{(2j-1,\,2j+2)}\sigma_{2j-1}^{y}\sigma_{2j+2}^{y}\;.
	\ee

\smallskip
\noindent\textbf{The First-Order Effective Hamiltonian.}
The first-order effective Hamiltonian is 
\be\label{eq:SWT_1}
	\mathcal{P}_{0}\,H_{\text{eff}}^{(1)}\,\mathcal{P}_{0}
	= \mathcal{P}_{0}\,V\,\mathcal{P}_{0}
	=0\;.
\ee

\smallskip
\noindent\textbf{The Second-Order Effective Hamiltonian.}
The second-order effective Hamiltonian has matrix elements
\be\label{eq:SWT_2}
(\mathcal{P}_{0}\,H_{\text{eff}}^{(2)}\,\mathcal{P}_{0})_{m,m'}=\frac{1}{2}\sum_{l}V_{m,l}V_{l,m'}\left[\frac{1}{E_m-E_l}+\frac{1}{E_{m'}-E_{l}}\right]\,,\,\,\,\,\,\,\,
\ee
where $l$ labels eigenstates in $\mathcal{P}_{0}^{\perp}$. Here $V_{m,l}$, and $E_{l}$ are the corresponding matrix elements of $V$ and $H_{0}$.
As $\mathcal{P}_{0}\,V_{j}^{z}V_{j'}^{z}\,\mathcal{P}_{0}=0$, $\mathcal{P}_{0}\,V_{j}^{y}V_{j'}^{y}\,\mathcal{P}_{0}=0$ for $j\neq j'$, and $\mathcal{P}_{0}\,V_{j}^{z}V_{j'}^{y}\,\mathcal{P}_{0}=0$ for all $j$ and $j'$, so one obtains
\beq
&&\hskip -0.8cm \mathcal{P}_{0}\,H_{\text{eff}}^{(2)}\,\mathcal{P}_{0}\nn
&=&-\frac{1}{4J_{x}}\left(\sum_{j=1}^{2N-1}J_{(2j-1,\,2j+2)}^{2}+\sum_{j=1}^{2N}J_{(2j-1,\,2j)}^{2}\right.\nn
&&\qquad\left.+J_{(1,\,4)}^{2}+J_{(4N-3,\,4N)}^{2}+J_{(1,\,2)}^{2}+J_{(4N-1,\,4N)}^{2}\right)\mathcal{P}_{0}\,.\nonumber
\eeq
This Hamiltonian does not involve the $\boldsymbol\sigma$'s, so it does not introduce any splitting of the different vortex configurations. 

\smallskip
\noindent\textbf{The Third-Order Effective Hamiltonian.}
The third-order effective Hamiltonian has matrix elements
\begin{eqnarray}\label{eq:SWT_3}
&&(\mathcal{P}_{0}\,H_{\text{eff}}^{(3)}\,\mathcal{P}_{0})_{m,m'}\nn
&&\qquad=-\frac{1}{2}\sum_{l,m''}\left[\frac{V_{m,l}V_{l,m''}V_{m'',m'}}{(E_{m'}-E_l)(E_{m''}-E_l)}+\frac{V_{m,m''}V_{m'',l}V_{l,m'}}{(E_m-E_l)(E_{m''}-E_l)}\right]\\
&&+\frac{1}{2}\sum_{l,l'}V_{m,l}V_{l,l'}V_{l',m'}\left[\frac{1}{(E_m-E_l)(E_m-E_{l'})}+\frac{1}{(E_{m'}-E_l)(E_{m'}-E_{l'})}\right]\,.\nonumber
\end{eqnarray}

We claim this simplifies to 
\beq\label{eq:effective_effective}
	\mathcal{P}_{0}\,H_{\text{eff}}^{(3)}\,\mathcal{P}_{0}
	&=&-\sum_{k=2}^{2N-2}
		\frac{J_{(2k-1,\,2k)}J_{(2k-1,\,2k+2)}J_{(2k+1,\,2k+2)}}
		{8J_{x}^{2}}\,\mathfrak{B}(\partial p_{k})\,\mathcal{P}_{0}\nn
	&&\quad -\frac{J_{(1,\,2)}\,J_{(1,\,4)}\,J_{(3,\,4)}}{2J_{x}^{2}}\mathfrak{B}(\partial p_{1})\,\mathcal{P}_{0}\nn
	&&\quad\quad  -\frac{J_{(4N-3,\,4N-2)}J_{(4N-3,\,4N)}J_{(4N-1,\,4N)}}{2J_{x}^{2}}\mathfrak{B}(\partial p_{2N-1})\,\mathcal{P}_{0}\,.\nn
\eeq

The minimal energy configuration for the Hamiltonian \eqref{eq:effective_effective} therefore occurs in the case that all $\mathfrak{B}(\partial p_{k})=+1$. The single sum over $k$ reflects the extensive nature of the eigenvalues in perturbation theory, see for example \cite{Bravyi_SW_2011}. The splitting of the degenerate ground states occurs in case a single vortex $\mathfrak{B}(\partial p_{k})=-1$. This raises the energy of such a state by the quantity 
	\be
		\delta E
				=\left\{\begin{array}{cc}
				\frac{J_{(2k-1,\,2k)}J_{(2k-1,\,2k+2)}J_{(2k+1,\,2k+2)}}{4J_{x}^{2}}\,,\hfill&\text{for } k=2,\ldots,2N-2\hfill\\
				\frac{J_{(1,\,2)}\,J_{(1,\,4)}\,J_{(3,\,4)}}{J_{x}^{2}}\,,\hfill& \text{for } k=1\hfill\\
				\frac{J_{(4N-3,\,4N-2)}J_{(4N-3,\,4N)}J_{(4N-1,\,4N)}}{J_{x}^{2}}\,, \hfill& \text{for } k=2N-1\hfill
				 \end{array} \right.\,.
	\ee
Which plaquette $p_{k}$ gives rise to the minimal energy shift depends upon the choice of the coupling constants $J_{(ij)}$. In every case, the energy shift is positive as long as $J_{(ij)}>0$. For given $M_{1}$ and $M_{2}$, the energy shifts $\delta E$ due to a single vortex on one plaquette---as given by third-order perturbation theory---are bounded away from zero, and also from above, uniformly in $N$.

We justify the expression \eqref{eq:effective_effective} as follows. The first sum in \eqref{eq:SWT_3} vanishes because $\mathcal{P}_{0}V\mathcal{P}_{0}=0$.
The perturbation $V_{z}^{j}$ contains the product $\sigma_{2j-1}^{z}\sigma_{2j}^{z}$ and $V_{y}^j$ contains the product $\sigma_{2j-1}^{y}\sigma_{2j+2}^{y}$, so the only possible  third-order terms have the form $V_{z} V_{y} V_{z}$, $V_{z} V_{z} V_{y}$, or $V_{y} V_{z} V_{z}$, where
\beq
V_{z} V_{y} V_{z}&=&- \sum_{j,k,l}J_{(2j-1,\,2j)}J_{(2k-1,\,2k+2)}J_{(2l-1,\,2l)}\,\sigma_{2j-1}^{z}\sigma_{2j}^z\,\sigma_{2k-1}^{y}\sigma_{2k+2}^{y}\,\sigma_{2l-1}^{z}\sigma_{2l}^{z}\,,\nonumber
\eeq
etc.
There are only two possible choices of indices such that $\mathcal{P}_{0}\,V_{z} V_{y} V_{z}\,\mathcal{P}_{0}$ does not vanish, namely $j=k$, $l=k+1$, and $l=k$ and $j=k+1$. One thus obtains
\begin{eqnarray}\label{eq:zyz}
&&\hskip -1cm \mathcal{P}_{0}\,V_{z} V_{y} V_{z}\,\mathcal{P}_{0}\nn
&=&-2\sum_{k}J_{(2k-1,\,2k)}J_{(2k-1,\,2k+2)}J_{(2k+1,\,2k+2)}\mathcal{P}_{0}\,\sigma_{2k-1}^{x}\,\sigma_{2k}^{z}\sigma_{2k+1}^{z}\sigma_{2k+2}^{x}\,\mathcal{P}_{0}\nonumber\\
&=&2\sum_{k}J_{(2k-1,\,2k)}J_{(2k-1,\,2k+2)}J_{(2k+1,\,2k+2)}\,\mathfrak{B}(\partial p_{k})\,\mathcal{P}_{0}\,.
\end{eqnarray}
Here we use $\mathcal{P}_{0}\,\sigma_{2k}^{z}\sigma_{2k+1}^{z}\,\mathcal{P}_{0}=-\mathcal{P}_{0}\,\sigma_{2k}^{y}\sigma_{2k+1}^{y}\,\mathcal{P}_{0}$ and the definition of $\mathfrak{B}(\partial p_{k})$ in \eqref{eq:Vortex Loop Operator}.
Similarly
\begin{eqnarray}\label{eq:zzy}
&&\hskip -1cm \mathcal{P}_{0}\,V_{z} V_{z} V_{y}\,\mathcal{P}_{0}\nn
&=&2\sum_{k}J_{(2k-1,\,2k)}J_{(2k-1,\,2k+2)}J_{(2k+1,\,2k+2)}\,\mathcal{P}_{0}\sigma_{2k-1}^{x}\sigma_{2k}^{z}\sigma_{2k+1}^{z}\sigma_{2k+2}^{x}\mathcal{P}_{0}\nonumber\\
&=&-2\sum_{k}J_{(2k-1,\,2k)}J_{(2k-1,\,2k+2)}J_{(2k+1,\,2k+2)}\,\mathfrak{B}(\partial p_{k})\,\mathcal{P}_{0}\,.
\end{eqnarray}

The terms in \eqref{eq:zyz}--\eqref{eq:zzy} that do not contain the boundary plaquettes $\mathfrak{B}(\partial p_{1})$ and $\mathfrak{B}(\partial p_{2N-1})$ cancel identically; they have the same energy denominators  and opposite signs. Finally
\begin{eqnarray}
&&\hskip -1cm \mathcal{P}_{0}V_{y} V_{z} V_{z}\,\mathcal{P}_{0}\nn
&=&2\sum_{k}J_{(2k-1,\,2k)}J_{(2k-1,\,2k+2)}J_{(2k+1,\,2k+2)}\,\mathcal{P}_{0}\sigma_{2k-1}^{x}\sigma_{2k}^{z}\sigma_{2k+1}^{z}\sigma_{2k+2}^{x}\,\mathcal{P}_{0}\nonumber\\
&=&-2\sum_{k}J_{(2k-1,\,2k)}J_{(2k-1,\,2k+2)}J_{(2k+1,\,2k+2)}\,\mathfrak{B}(\partial p_{k})\,\mathcal{P}_{0}\,.
\end{eqnarray}
Therefore the contribution to $\mathcal{P}_{0}H_{\text{eff}}^{(3)}\mathcal{P}_{0}$ that does not involve the boundary plaquettes $p_{1}$ and $p_{2N-1}$ is
$$-\sum_{k=2}^{2N-2}
		\frac{J_{(2k-1,\,2k)}J_{(2k-1,\,2k+2)}J_{(2k+1,\,2k+2)}}
		{8J_{x}^{2}}\,\mathfrak{B}(\partial p_{k})\,\mathcal{P}_{0}\,.$$

The situation is different for terms entering in the perturbations $V_{z}V_{y}V_{z}$, $V_{z}V_{z}V_{y}$, and $V_{y}V_{z}V_{z}$ and involving plaquettes $p_{1}$ and $p_{2N-1}$. The reason is that $\boldsymbol\sigma_{1}$ and $\boldsymbol\sigma_{4N}$ do not enter into $H_{0}$. 
Taking this into account, the coefficients of the boundary terms differ. However, they are still negative and the third-order effective Hamiltonian is \eqref{eq:effective_effective}.

The fact that the perturbation theory result applies in a region of couplings for small $M_{1}/M_{2}$ is a consequence of the analyticity of the eigenvalues, see \cite{Kato1949} and \S II.1.3--II.1.4 of \cite{KatoBook} .
$\hfill\qed$
\end{proof}

\begin{prop}[\bf Closed Ladder with $N>2$]
Under the hypothesis of Proposition \ref{prop:open ladder perturbation}, the ground state of the Hamiltonian $H$ in \eqref{eq:Hamiltonian} with closed boundaries is vortex-free on each plaquette $p_{1},\ldots,p_{2N}$. The effective Hamiltonian to third order is
	\beq
		\mathcal{P}_{0}H_{\text{eff}}^{\leqslant 3}\mathcal{P}_{0}
		&=&H_{0}\mathcal{P}_{0}
		-\frac{1}{4J_{x}}\lrp{\sum_{j=1}^{2N-1}J_{(2j-1,\,2j+2)}^{2}+\sum_{j=1}^{2N}J_{(2j-1,\,2j)}^{2}+J_{(2,\,4N-1)}^{2}}\,\mathcal{P}_{0}\nn
		&&-\sum_{k=1}^{2N-1}\frac{J_{(2k-1,\,2k)}J_{(2k-1,\,2k+2)}J_{(2k+1,\,2k+2)}}{8J_{x}^{2}}\,\mathfrak{B}(\partial p_{k})\,\mathcal{P}_{0}\,.
	\eeq
\end{prop}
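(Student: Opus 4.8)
The plan is to repeat the degenerate (Schrieffer--Wolff) perturbation analysis used for Proposition~\ref{prop:open ladder perturbation}, carried to third order, and to adapt only the features that change when the ladder is closed. First I would record the two bonds that the closure adds to the open ladder: the closing type-$x$ bond $(4N,1)$, which enters $H_{0}$, and the closing type-$y$ bond $(2,4N-1)$, which enters $V_{y}$. Accordingly $\mathcal{P}_{0}$ becomes the span of those $\sigma^{x}$-eigenstates satisfying $\sigma_{i}^{x}\sigma_{j}^{x}=+1$ on every type-$x$ bond, now including $(4N,1)$, and each $\mathfrak{B}(\partial p_{k})$ still commutes with $H_{0}$ and hence with $\mathcal{P}_{0}$. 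The structurally decisive consequence of the new $x$-bond is that $\boldsymbol{\sigma}_{1}$ and $\boldsymbol{\sigma}_{4N}$ now occur in $H_{0}$, so the two end plaquettes that were exceptional in the open ladder are no longer exceptional.

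The low orders then proceed essentially verbatim. At first order $\mathcal{P}_{0}V\mathcal{P}_{0}=0$, because every $V_{z}^{j}$ and $V_{y}^{j}$ flips a pair of $\sigma^{x}$ eigenvalues and sends $\mathcal{P}_{0}$ into $\mathcal{P}_{0}^{\perp}$. At second order only the diagonal returns survive the projection; summing $V_{z}^{j}V_{z}^{j}$ over the $2N$ type-$z$ bonds and $V_{y}^{j}V_{y}^{j}$ over the $2N$ type-$y$ bonds---the closing bond contributing the separate $J_{(2,4N-1)}^{2}$---reproduces the scalar term $-\tfrac{1}{4J_{x}}\lrp{\cdots}\mathcal{P}_{0}$ in the statement, which is proportional to $\mathcal{P}_{0}$ and therefore splits no vortex sector.

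The substance of the argument is third order. I would enumerate the surviving processes $V_{z}V_{y}V_{z}$, $V_{z}V_{z}V_{y}$ and $V_{y}V_{z}V_{z}$ and, exactly as in \eqref{eq:zyz}--\eqref{eq:zzy}, retain for each plaquette the two index choices that return to $\mathcal{P}_{0}$. Using the identity $\mathcal{P}_{0}\,\sigma_{a}^{z}\sigma_{b}^{z}\,\mathcal{P}_{0}=-\mathcal{P}_{0}\,\sigma_{a}^{y}\sigma_{b}^{y}\,\mathcal{P}_{0}$ on the two sites joined by the type-$x$ bond of each plaquette, each surviving process collapses to $\mathfrak{B}(\partial p_{k})$ times the product of the three couplings on its $z$-$y$-$z$ bonds. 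Here the closed case departs from the open one: since sites $1$ and $4N$ now lie in $H_{0}$, the intermediate energy denominators for the wrap-around plaquettes agree with those of the bulk, so every plaquette---the two former boundary plaquettes and the connecting plaquette closing the ladder included---acquires the uniform coefficient $1/(8J_{x}^{2})$ rather than the enhanced $1/(2J_{x}^{2})$ of the open ladder. With all couplings positive these coefficients are strictly positive, so minimizing a sum $-\sum_{k}\kappa_{k}\,\mathfrak{B}(\partial p_{k})$ with $\kappa_{k}>0$ over the commuting $\pm1$ observables $\mathfrak{B}(\partial p_{k})$ is achieved precisely at $\mathfrak{B}(\partial p_{k})=+1$ for every $k$, i.e.\ a vortex-free ground state; I would also check that this assignment is compatible with the single global relation $\prod_{k}\mathfrak{B}(\partial p_{k})=\mathfrak{B}_{\mathrm{top}}\,\mathfrak{B}_{\mathrm{bot}}$ that ties the plaquette vortices to the two big-loop vortices on the cylinder.

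I expect the main obstacle to be the third-order bookkeeping around the closure, where the hypothesis $N>2$ is essential. One must check that the pair of closing bonds generates exactly one additional $z$-$y$-$z$ process, belonging to the connecting plaquette that closes the ladder, and introduces no spurious terms; the restriction $N>2$ is what guarantees that the bonds feeding distinct plaquette processes remain distinct, so that no two of the operators $V_{z}^{j},V_{y}^{j}$ coincide or share a site in an unintended way, as can happen on the very short $N=2$ cylinder. The remaining point, as in Proposition~\ref{prop:open ladder perturbation}, is that this perturbative ordering of energies persists throughout a genuine region with $M_{1}/M_{2}$ small; for this I would appeal to the analyticity of the eigenvalues, \cite{Kato1949} and \S II.1.3--II.1.4 of \cite{KatoBook}.
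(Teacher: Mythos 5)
Your proposal is correct and follows essentially the same route as the paper's proof: absorb the closing type-$x$ bond into $H_{0}$ and the closing type-$y$ bond into $V_{y}$ (so $V_{y}=\sum_{j=1}^{2N}V_{y}^{j}$), note that first order vanishes and second order is a multiple of $\mathcal{P}_{0}$, and observe that at third order the closed geometry equalizes all energy denominators so that every plaquette acquires the uniform bulk coefficient $1/(8J_{x}^{2})$ and no boundary terms appear, forcing $\mathfrak{B}(\partial p_{k})=+1$ in the ground state. The only minor divergences are that you treat the closing plaquette $p_{2N}$ explicitly on the same footing (arguably more complete than the paper's displayed sum, which stops at $k=2N-1$ even though its "no boundary terms" reasoning gives full uniformity), and that the paper invokes the hypothesis $N>2$ chiefly at second order---where for $N=2$ cross terms such as $\mathcal{P}_{0}V_{y}^{j}V_{y}^{j'}\mathcal{P}_{0}$ with $j\neq j'$ fail to vanish because two distinct $y$-bonds can straddle the same pair of $x$-bond constraints---whereas you place it at third order; both orders in fact require it.
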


\noindent \textbf{Remark.} The perturbative expansion up to third order will not give a splitting in energy due to a vortex on the big loop (the shortest loop around the top or bottom of the closed ladder). This will occur only in perturbation theory of order $O(N)$; but a single vortex in this loop gives an energy shift that is exponentially small in $N$.

\begin{proof} 
The Hamiltonian $H$ in \eqref{eq:Hamiltonian} possesses two additional bonds $(2,\,4N-1)$ and $(1,\,4N)$ that do not occur in the open ladder. They yield interaction terms
	\be
	H_{x}^{\text{closed}}=-J_{x}\,\sigma_{1}^{x}\,\sigma_{4N}^{x}\qquad			\text{and}\qquad V_{y}^{\text{closed}}=-J_{(2,\,4N-1)}\,\sigma_{2}^{y}					\sigma_{4N-1}^{y}\,.
	\ee
We incorporate $H_{x}^{\text{closed}}$ in the unperturbed Hamiltonian $H_{0}$ and $V_{y}^{\text{closed}}$ in the perturbation $V_{y}$. Define $V_{y}^{2N}=V_{y}^{\text{closed}}$, we have $V_{y}=\sum_{j=1}^{2N}V_{y}^{j}$. We now derive the first order, second order, and third order effective Hamiltonians.

\smallskip
\noindent \textbf{The First-Order Effective Hamiltonian.}  As in the  proof of Proposition \ref{prop:open ladder perturbation}, the first-order effective Hamiltonian vanishes.

\smallskip
\noindent \textbf{The Second-Order Effective Hamiltonian.} Also as in the  proof of Proposition \ref{prop:open ladder perturbation}, the second-order is given in \eqref{eq:SWT_2}. For ladders with $N>2$ the only second order terms that do not vanish are $\mathcal{P}_{0}\lrp{V_{z}^{j}}^{2}\mathcal{P}_{0}$ and $\mathcal{P}_{0}\lrp{V_{y}^{j}}^{2}\mathcal{P}_{0}$. The ladder being closed, all the energy denominators in \eqref{eq:SWT_2} are the same. One thus obtains the second-order term.

\smallskip
\noindent \textbf{The Third-Order Effective Hamiltonian.} As for the open ladder, the third-order effective Hamiltonian is given in \eqref{eq:SWT_3}. For the same reason as in the case of the open ladder, the first sum in \eqref{eq:SWT_3} vanishes. Again the relevant perturbations are $\mathcal{P}_{0}V_{z} V_{y} V_{z}\mathcal{P}_{0}$, $\mathcal{P}_{0}V_{z} V_{z} V_{y}\mathcal{P}_{0}$, $\mathcal{P}_{0}V_{y} V_{z} V_{z}\mathcal{P}_{0}$. The cancelation of the terms \eqref{eq:zyz} and \eqref{eq:zzy} of the first two perturbations for the open ladder also takes place for the closed ladder. Furthermore, since the ladder is closed, the energy denominators appearing in $\mathcal{P}_{0}V_{z} V_{y} V_{z}\mathcal{P}_{0}$, $\mathcal{P}_{0}V_{z} V_{z} V_{y}\mathcal{P}_{0}$, $\mathcal{P}_{0}V_{y} V_{z} V_{z}\mathcal{P}_{0}$ are all the same and no ``boundary'' terms appear in the third-order effective Hamiltonian.
$\hfill\qed$
\end{proof}

\section{Conjecture}\label{sec:conjecture}
Based on the numerical calculations that we performed in \S\ref{sec:Numerical_Evidence} and the perturbation calculations we performed in \S\ref{sec:non reflection-symm}, we formulate the following conjecture for ladder Hamiltonians:

\begin{conj}\label{conj:1}
For a closed ladder, the ground state energies  of $H$ in \eqref{eq:Hamiltonian} and $\widetilde{H}$ in \eqref{eq:HamiltonianExtended} coincide.   For a closed or open ladder with the coupling constants $J_{(ij)}$  all positive or all negative, the ground states of $H$ and $\H$ are vortex-free.
\end{conj}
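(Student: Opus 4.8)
By Proposition~\ref{prop:productu} the fluxes $\widetilde{\mathfrak{B}}(\partial p)=-\prod_{(ij)\in\partial p}u_{ij}\in\{\pm1\}$ are conserved, so $\HH$ decomposes into sectors labelled by a flux pattern $\{w_p\}$, and on each sector $\widetilde H$ reduces to the free-Majorana Hamiltonian $\H_u$ of \eqref{eq:u_Hamiltonian} for a representative sign choice $u$. Writing $A_{ij}=J_{(ij)}u_{ij}$ for the real antisymmetric coupling matrix, the sector ground-state energy is $\widetilde E_0(u)=-\sum_k\sigma_k(A)$, the negative sum of the singular-value pairs of $A$; with $u$ fixed this is concave in the couplings and varies analytically with them \cite{Kato1949,KatoBook}. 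The vortex-free half of the conjecture asks that, when all $J_{(ij)}$ share a sign, the global minimum over sectors be attained at $w_p\equiv+1$.

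My plan is a deformation argument anchored at the reflection-symmetric results. Connect a given same-sign configuration $J$ to the homogeneous one by the straight-line path $J(t)=(1-t)J^{\mathrm{hom}}+tJ$, $t\in[0,1]$, which stays same-sign; at $t=0$ Corollary~\ref{rem:vortex-free} (via Theorem~\ref{thm:main_theorem} and \cite{JP}) gives $w_p\equiv+1$ as the minimizer. Since each sector energy is continuous in $J$, the vortex-free sector remains the minimizer along the whole path provided the \emph{vortex gap}
\[
\Delta(J)=\min_{\{w_p\}\neq\{+1\}}\widetilde E_0(\{w_p\})-\widetilde E_0(\{+1\})
\]
stays strictly positive. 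Third-order perturbation theory (Proposition~\ref{prop:open ladder perturbation} and its closed-ladder companion) supplies $\Delta>0$, bounded away from zero uniformly in $N$, in the regime $J_x\gg J_y,J_z$. The main obstacle for this half is to promote that bound to a non-perturbative, uniform-in-$N$ estimate along the entire path: because the ladder is quasi-one-dimensional and $\H_u$ is local with a spectral gap of order $J_x$, a single flipped flux should have an exponentially localized self-energy computable by a convergent cluster/polymer expansion whose leading term is the perturbative answer, exactly the program flagged after Proposition~\ref{prop:open ladder perturbation} and pursued in \cite{Datta1996_1,Datta1996,GJSAnnals}. Controlling the accumulation of corrections along an arbitrarily long ladder, with no symmetry to organize them, is where I expect the real difficulty to lie.

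For the equality of ground-state energies on a closed ladder, the bound $\widetilde E_0\le E_0$ is the variational principle applied to $\mathcal H=P\HH\subset\HH$, so the content is $E_0\le\widetilde E_0$: the unconstrained minimum must be attained by a physical state with every $\gamma_j^5=+1$. Here I would use the string operators $S_{jk}=\prod_{(i'j')\in\Gamma}u_{i'j'}$, the product of the conserved variables $u_{i'j'}$ along a path $\Gamma$ from $j$ to $k$. Each $S_{jk}$ commutes with $\widetilde H$ and with every flux, while a single $b$ survives at each endpoint and a pair of $b$'s at each interior site, so $S_{jk}$ anticommutes with $\gamma_j^5$ and $\gamma_k^5$ and commutes with all other $\gamma_i^5$. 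Thus within a fixed flux sector one may flip the $\gamma^5$-eigenvalues \emph{in pairs} at zero energy cost, and a ground state of the optimal (vortex-free) sector can be rotated to the physical all-$\gamma_i^5=+1$ state at the same energy $\widetilde E_0$ precisely when the conserved global parity $\prod_i\gamma_i^5$ equals $+1$. The equality $E_0=\widetilde E_0$ therefore reduces to showing that the minimum of $\widetilde H$ lies in the sector $\prod_i\gamma_i^5=+1$. This single residual $\mathbb Z_2$ obstruction is exactly what the local operators $Q_j$ remove for the open ladder in Theorem~\ref{prop:multiplicity} but which Proposition~\ref{prop:monomials} shows persists once the ladder is closed; it is the fermion-parity-on-a-ring phenomenon of Kitaev's torus analysis \cite{KitaevHoney}, and establishing its favorable sign for all same-sign couplings is the main obstacle for this half. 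The numerics of \S\ref{sec:Numerical_Evidence} indicate the obstruction is in fact absent, which is the evidence underlying the conjecture.
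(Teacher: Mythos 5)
The statement you are attacking is Conjecture~\ref{conj:1}; the paper itself offers no proof of it, only the numerical evidence of \S\ref{sec:Numerical_Evidence} and the third-order perturbation theory of \S\ref{sec:non reflection-symm}. Your proposal does not close it either, and the two places where it stops are exactly the two open problems the paper identifies. For the vortex-free half, your deformation argument is circular at the decisive step: continuity of the sector energies $\widetilde{E}_0(\{w_p\})$ in $J$ only transports the minimizer from the homogeneous point (where Corollary~\ref{rem:vortex-free} applies) provided the vortex gap $\Delta(J(t))$ stays strictly positive along the whole path, and that positivity \emph{is} the conjecture. The perturbative input you cite does not supply it: the regime $J_x\gg J_y,J_z$ of Proposition~\ref{prop:open ladder perturbation} is disjoint from a straight-line path emanating from the homogeneous point (except possibly near $t=1$). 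Worse, your claim that $\Delta$ is bounded away from zero uniformly in $N$ is false for closed ladders: as the Remark after the closed-ladder proposition in \S\ref{sec:non reflection-symm} states, a flux through the big loop splits only at order $O(N)$ in perturbation theory and costs an energy exponentially small in $N$ (visible in Fig.~\ref{fig:Vortex_Energy_All}). For fixed $N$ one only needs $\Delta>0$, but establishing that non-perturbatively along the path is precisely the cluster-expansion program \cite{Datta1996_1,Datta1996,GJSAnnals} that the paper explicitly leaves as ``an interesting project.''

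For the energy-equality half, your string-operator reduction is sound and is genuinely sharper than anything in the paper: $S_{jk}=\prod_{(i'j')\in\Gamma}u_{i'j'}$ does commute with $\widetilde{H}$ and with all fluxes, and flips $\gamma_j^5,\gamma_k^5$ in pairs, so $E_0=\widetilde{E}_0$ for the closed ladder reduces to showing that some ground state of $\widetilde{H}$ lies in the sector $\prod_i\gamma_i^5=+1$. But that residual $\mathbb{Z}_2$ sign is exactly the obstruction that Proposition~\ref{prop:monomials} shows cannot be removed by any local (monomial) operator, and you give no mechanism to evaluate it; in the honeycomb analogue this requires the explicit form of the projector $P$ in terms of the $u_{ij}$ and the fermion parity of the $c$-sector ground state \cite{PCL}, and its sign is a nontrivial function of the couplings. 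So both halves of your argument terminate at unproven statements --- a non-perturbative gap bound and a parity-sector determination --- which is why the paper states this as a conjecture rather than a theorem. Your write-up is a well-organized research plan, and the parity reduction is worth keeping, but it is not a proof.
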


It is known that one has qualitatively different behavior in two-dimensional spin systems with trivalent interactions \cite{PCL}.

\section{Acknowledgements}
Arthur Jaffe wishes to thank Daniel Loss for hospitality at the University of Basel where most of this work has been done, to Gian Michele Graf for hospitality at the ETH, and to J\"urg Fr\"ohlich for discussions. This work was supported by the Swiss NSF, NCCR QSIT, NCCR Nanoscience, 
NSERC, CIFAR, FRQNT, INTRIQ, 
and the Pauli Center of the ETH.


\begin{thebibliography}{100}
\bibitem{Bravyi2002}
Sergey B. Bravyi and Alexei Yu. Kitaev, Fermionic Quantum Computation, \textit{Ann. Phys.} \textbf{298}, 210---226 (2002).

\bibitem{Bravyi_SW_2011}
Sergey Bravyi, David P. DiVincenzo, and Daniel Loss, Ann. Phys. \textbf{326}, 2793 (2011).


\bibitem{Nussinov2008}
Han-Dong Chen and Zohar Nussinov, Exact results of the Kitaev model on a hexagonal lattice: spin states, string and brane correlators, and anionic excitations, \textit{J. Phys. A: Math. Theor.} \textbf{41}, 075001 (2008).

\bibitem{Datta1996_1}
Nilanjana Datta, Roberto Fern\'andez, and J\"urg Fr\"ohlich, Low-Temperature Phase Diagrams of Quantum Lattice Systems. I. Stability for Quantum Perturbations of Classical Systems with Finitely-Many Ground States, \textit{J. Stat. Phys. } \textbf{84}, 455--534 (1996).

\bibitem{Datta1996}
Nilanjana Datta, Roberto Fern\`andez, J\"urg Fr\"ohlich, Luc Rey-Bellet, Low-Temperature Phase Diagrams of Quantum Lattice
Systems. II. Convergent Perturbation Expansions and Stability in Systems with Infinite Degeneracy, \textit{Helv. Phys. Acta} \textbf{69}, 752--820 (1996).

\bibitem{Vishveshwara2011}
Wade DeGottardi, Diptiman Sen, and Smitha Vishveshwara, Topological phases, Majorana modes and quench dynamics in a spin ladder system, \textit{New J. Phys.} \textbf{13}, 065028 (2011).

\bibitem{DLS} Freeman J. Dyson, Elliott H. Lieb, and Barry Simon,
Phase Transitions in Quantum Spin Systems with Isotropic and Nonisotropic Interactions, {\em J. Stat. Phys.} {\bf 18}, 335--383 (1978).

\bibitem{Feng2007}
Xiao-Yong Feng, Guang-Ming Zhang, and Tao Xiang, Topological Characterization of Quantum Phase Transitions in a Spin-1/2 Model, \textit{Phys. Rev. Lett.} \textbf{98}, 087204 (2007).

\bibitem{FRS}
K. Fredenhagen, K. H. Rehren, and B. Schroer, Superselection Sectors with Braid Group Statistics and Exchange Algebras, \textit{Commun. Math. Phys.} \textbf{125}, 201--226 (1989).

\bibitem{FLW}
M. Freedman, M. Larsen, and Z. Wang, A Modular Functor Which is Universal for Quantum Computation, \textit{Commun. Math. Phys.} \textbf{227}, 605--622 (2002).

\bibitem{FKLW}
Michael H. Freedman, Alexei Kitaev, Michael J. Larsen, and Zhenghan Wang, Topological quantum computation, \textit{Bull. Amer. Math. Soc.} \textbf{40}, 31--38 (2003).

\bibitem{FSS}  J\"urg Fr\"ohlich,  Barry Simon, and Thomas Spencer, Infrared Bounds, Phase Transitions, and Continuous Symmetry Breaking, {\em Commun. Math. Phys.} {\bf 50}, 79--85 (1976).

\bibitem{FILS1}
J\"urg Fr\"ohlich, Robert Israel, Elliott H. Lieb, and Barry Simon, Phase Transitions and Reflection Positivity. I. General Theory and Long Range Lattice Models, \textit{Commun. Math. Phys.}, \textbf{62}, 1--34 (1978), and 
Phase Transitions and Reflection Positivity. II. Short Range Lattice Models and Coulomb Systems, \textit{J. Stat. Phys.}, \textbf{22}, 297--347 (1980).

\bibitem{FM}
J. Fr\"ohlich and P. Marchetti, Quantum Field Theory of Anyons, \textit{Lett. Math. Phys.} \textbf{16}, 347--358 (1988).

\bibitem{GJ}  James Glimm and Arthur Jaffe, {\em Quantum Physics}, $2^{\rm nd}$ Edition, Springer Verlag, 1987.

\bibitem{GJSAnnals}
James Glimm, Arthur Jaffe, and Thomas Spencer, The Wightman axioms and particle structure in a $P(\phi)_{2}$ quantum field model, \textit{Ann. Math.} \textbf{100}, 585--632 (1974).

\bibitem{GJS}
James Glimm, Arthur Jaffe, and Thomas Spencer, Phase Transitions for $\phi^{4}_{2}$ quantum fields, \textit{Commun. Math. Phys.}, \textbf{45}, 203--216 (1975).

\bibitem{Goldin1971}
Gerald A. Goldin, Nonrelativistic Current Algebras as Unitary Representations of Groups, \textit{J. Math. Phys.} \textbf{12}, 462 (1971).

\bibitem{Chen2013}
Yin-Chen He and Yan Chen, Measuring fermions qubit state and non-Abelian braiding statistics in quenched inhomogeneous spin ladders, arXiv: 1210.5139 (2013).

\bibitem{JP}
Arthur Jaffe and Fabio L. Pedrocchi, Reflection Positivity for Majorana Fermions, preprint {\color{blue} \url{http://arxiv.org/abs/1305.1792}}.

\bibitem{Kato1949}
Tosio Kato, On the Convergence of the Perturbation Method I., \textit{Prog. Theor. Phys.} \textbf{IV}, 514--523 (1949).

\bibitem{KatoBook}
Tosio Kato, \textit{Perturbation Theory for Linear Operators}, Springer-Verlag, New York (1966). 

\bibitem{KitaevBook}
A. Yu. Kitaev, A. H. Sen, and M. N. Vyalyi, \textit{Classical and Quantum Computation}, American Mathematical Society, Providence, Rhode Island (2002).

\bibitem{KitaevToric}
A. Yu. Kitaev, Fault-tolerant quantum computation by anyons, \textit{Ann. Phys.} \textbf{303}, 2--30 (2003).

\bibitem{KitaevHoney}
Alexei Kitaev, Anyons in an exactly solved model and beyond, \textit{Ann. Phys.} \textbf{321}, 2--111 (2006).

\bibitem{KugelKhomskii}
K. I. Kugel and D. I. Khomskii, The Jahn-Teller effect and magnetism: transition metal compounds, \textit{Sov. Phys. Usp.} \textbf{25}, 231--256 (1982).

\bibitem{Motrunich2011}
Hsin-Hua Lai and Olexei I. Motrunich, Majorana spin liquids on a two-leg ladder, \textit{Phys. Rev. B} \textbf{84}, 235148 (2011).

\bibitem{LM}
J. M. Leinaas and J. Myrheim, On the Theory of Identical Particles, \textit{Il Nuovo Cimento} \textbf{37B}, 1--23 (1977).

\bibitem{Lieb1994}
Elliott H. Lieb, Flux Phase of the Half-Filled Band, \textit{Phys. Rev. Lett.} \textbf{73}, 2158--2161 (1994).

\bibitem{Longo}
Roberto Longo, Index of Subfactors and Statistics of Quantum Fields. I, \textit{Commun. Math. Phys.} 126, 217--247 (1989).


\bibitem{Wen2012}
Zheng-Xin Liu, Zhen-Biao Yang, Yong-Jian Han, Wei Yi, and Xiao-Gang Wen, Symmetry-protected topological phases in spin ladders with two-body interactions, \textit{Phys. Rev. B} \textbf{86}, 195122 (2012).

\bibitem{Lloyd}
Seth Lloyd, Quantum Computation with Abelian Anyons, \textit{Quant. Inf. Proc.} \textbf{1}, 13 (2002).

\bibitem{Nachtergaele1996}
Nicolas Macris and Bruno Nachtergaele, On the Flux Phase Conjecture at Half-Filling: An Improved Proof, J. Stat. Phys. \textbf{85}, 745--761 (1996).

\bibitem{OS1} Konrad Osterwalder and
 Robert Schrader, Axioms for Euclidean Green's functions, I and II.
 {Commun. Math. Phys.} {\bf 31},  83--112 (1973), and 
 {Commun. Math. Phys.}  {\bf 42},  281--305 (1975).
 
 \bibitem{OS3} K. Osterwalder and E. Seiler, Gauge Field Theories on a Lattice, \textit{Ann. Phys.} \textbf{110}, 440--471 (1978).

\bibitem{Pachos2008}
Ville Lahtinen, Graham Kells, Angelo Carollo, Tim Stitt, Jiri Vala, and Jiannis K. Pachos, Spectrum of the non-abelian phase in Kitaev's honeycomb lattice mode, \textit{Ann. Phys.} \textbf{323}, 2286--2310 (2008).

\bibitem{Pachos}  Jiannis K. Pachos, {\em Introduction to Topological Quantum Computation}, Cambridge University Press, 2012.

\bibitem{PCL}  Fabio L. Pedrocchi, Stefano Chesi, and Daniel Loss, Physical solutions of the Kitaev honeycomb model, {Phys. Rev. B} {\bf 84}, 165414 (2011).

\bibitem{Chesi2012}
Fabio L. Pedrocchi, Stefano Chesi, Suhas Gangadharaiah, and Daniel Loss, Majorana states in inhomogeneous spin ladders, Phys. Rev. B {\bf 86}, 205412 (2012).

\bibitem{Preskill}
John Preskill, Lecture Notes for Physics 219: Quantum computation, {\color{blue}\url{http://www.theory.caltech.edu/~preskill/ph219/topological.pdf}}.

\bibitem{Saket2010}
Abhinav Saket, S. R. Hassan, and R. Shankar, Manipulating unpaired Majorana fermions in a quantum spin chain, \textit{Phys. Rev. B} \textbf{82}, 174409 (2010).

\bibitem{Schweber}  Silvan S. Schweber, {\em An Introduction to Relativistic Quantum Field Theory},  Row, Peterson and Company, 1961.

\bibitem{Streater-Wilde}
R. F. Streater and I. F. Wilde, Fermion States of a Boson Field, \textit{Nuclear Physics B} \textbf{24}, 561--575 (1970).

\bibitem{WilczekPRL}
Frank Wilczek, Quantum Mechanics of Fractional-Spin Particles, \textit{Phys. Rev. Lett.} \textbf{49}, 957--959 (1982).


\end{thebibliography}
%

\end{document}